\newcommand*\curvedrightleft{%
  \mathrel{%
    \begin{tikzpicture}[scale=0.5]%
      \draw[->] (0,0.05) to [out=30,in=150] (1,0.05);%
      \draw[->] (1,-0.05) to [out=-150,in=-30] (0,-0.05);%
    \end{tikzpicture}%
  }%
}
\newtheorem{theorem}{Theorem}
\newcommand{\pdv}[2]{\frac{\partial #2}{\partial #1}}
\newcommand{\RN}[1]{%
  \textup{\uppercase\expandafter{\romannumeral#1}}%
}
\def\BibTeX{{\rm B\kern-.05em{\sc i\kern-.025em b}\kern-.08em
    T\kern-.1667em\lower.7ex\hbox{E}\kern-.125emX}}
\begin{document}

\title{Wireless Communications with Space-Time Modulated Metasurfaces}

\author{Marouan~Mizmizi,~\IEEEmembership{Member,~IEEE,}  Dario~Tagliaferri,~\IEEEmembership{Member,~IEEE,}   Umberto~Spagnolini,~\IEEEmembership{Senior Member,~IEEE}
\thanks{M.\, Mizmizi, D.\ Tagliaferri, U.\ Spagnolini are with the  Department of Electronics, Information and Bioengineering (DEIB) of Politecnico di Milano, 20133 Milan, Italy  (e-mail: [marouan.mizmizi, dario.tagliaferri, umberto.spagnolini]@polimi.it }
\thanks{U.\ Spagnolini is Huawei Industry Chair.}
\thanks{This paper was supported by the European Union under the Italian National Recovery and Resilience Plan (NRRP) of NextGenerationEU, partnership on “Telecommunications of the Future” (PE00000001 - program “RESTART”).}}

\maketitle 

\begin{abstract}
Space-time modulated metasurfaces (STMMs) are a newly investigated technology for the next 6G generation wireless communication networks. An STMM augments the spatial phase function with a time-varying one across the meta-atoms, allowing for the conveyance of information that possibly modulates the impinging signal. Hence, STMM represents an evolution of reconfigurable intelligent surfaces (RIS), which only design the spatial phase pattern. 
STMMs convey signals without a relevant increase in the energy budget, which is convenient for applications where energy is a strong constraint.
This paper proposes a mathematical model for STMM-based wireless communication, that creates the basics for two potential STMM architectures. One has excellent design flexibility, whereas the other is more cost-effective. The model describes STMM's distinguishing features, such as space-time coupling, and their impact on system performance. The proposed STMM model addresses the design criteria of a full-duplex system architecture, in which the temporal signal originating at the STMM generates a modulation overlapped with the incident one.
The presented numerical results demonstrate the efficacy of the proposed model and its potential to revolutionize wireless communication.
\end{abstract}

\begin{IEEEkeywords}
Space-time modulated metasurfaces, space-time phase coupling, RIS, 6G
\end{IEEEkeywords}

\section{Introduction}

The development of future wireless communication networks is a collective effort that entails the integration of multiple technologies as well as addressing several challenges. The goal of 6G is to provide ultra-high data rates, low latency, and high reliability for a wide range of applications and services such as autonomous vehicles, eHealth, and the Internet of Things \cite{SaaBenChe:J20, mizmizi20216g}. To reach these objectives, 6G links must integrate advanced technologies with the use of high-frequency bands and energy-efficient solutions.

The emergence of electromagnetic (EM) metasurfaces, a.k.a. reconfigurable intelligent surfaces (RIS), has opened up new possibilities in the design of wireless propagation \cite{di2020smart}. A RIS is typically made of quasi-passive arrays of sub-wavelength-sized meta-atoms whose scattering properties are engineered to manipulate the impinging EM waves and control the reflection \cite{9999292, zhang2021performance}.
RIS underwent tremendous conceptual and, to a minor extent, experimental development through several studies. In particular, RISs have been proposed for overcoming blockage \cite{Zeng9539048}, with specific application to vehicular scenarios \cite{Mizmizi2022_conformal,Tagliaferri2022_conformal,Mizmizi2022_conformal_conf}, for coverage extension \cite{Kurt9359529} and, recently, as support for sensing~\cite{9732186}.

The RIS technology (and metamaterials in general) has experienced a surge in interest in the last decade, following the formulation of the generalized Snell's law in 2011 \cite{doi:10.1126/science.1210713}. It is based on the conservation of both the linear momentum and the energy of reflected and refracted waves \cite{shaltout2015time}. The linear momentum is related to spatial symmetry, i.e., an EM wave impinging from an arbitrary angle $\theta_i$ is reflected at the specular one $\theta_o = \theta_i$. Setting the spatial-phase gradient across the RIS implies breaking the spatial symmetry, thus allowing a reflection toward $\theta_o\neq \theta_i$. Energy conservation relates to the time-reversal symmetry or Lorentz reciprocity, i.e., if we can reflect an EM wave coming from a particular angle $\theta_i$ towards a specific angle $\theta_o$, the same EM wave coming back from $\theta_o$ will be reflected toward $\theta_i$. 

\begin{figure*}[t]
    \centering
    \subfloat[RIS]{\includegraphics[width=0.32\textwidth]{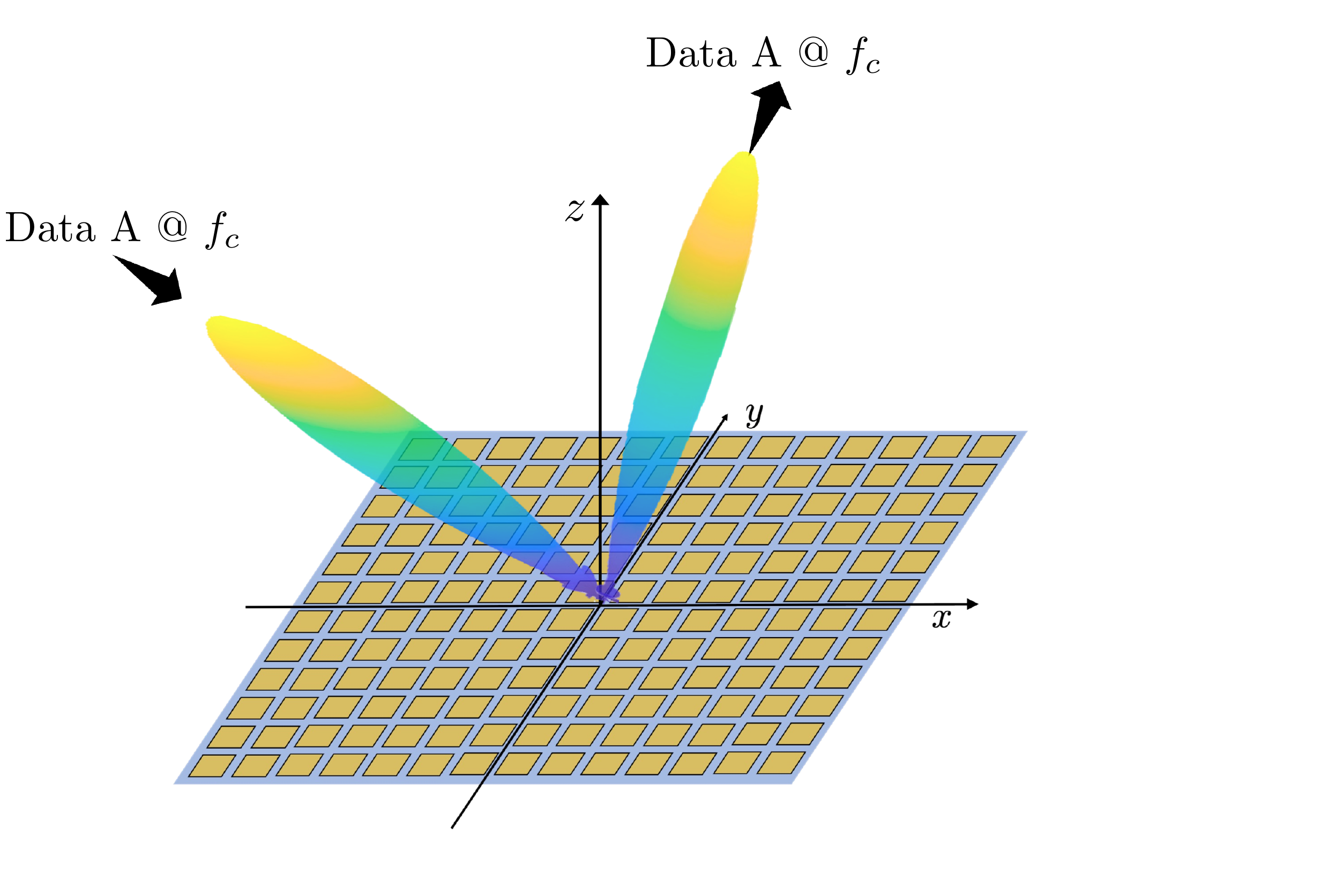}\label{sub_fig:Intro_RIS}}
    \subfloat[STCM]{\includegraphics[width=0.32\textwidth]{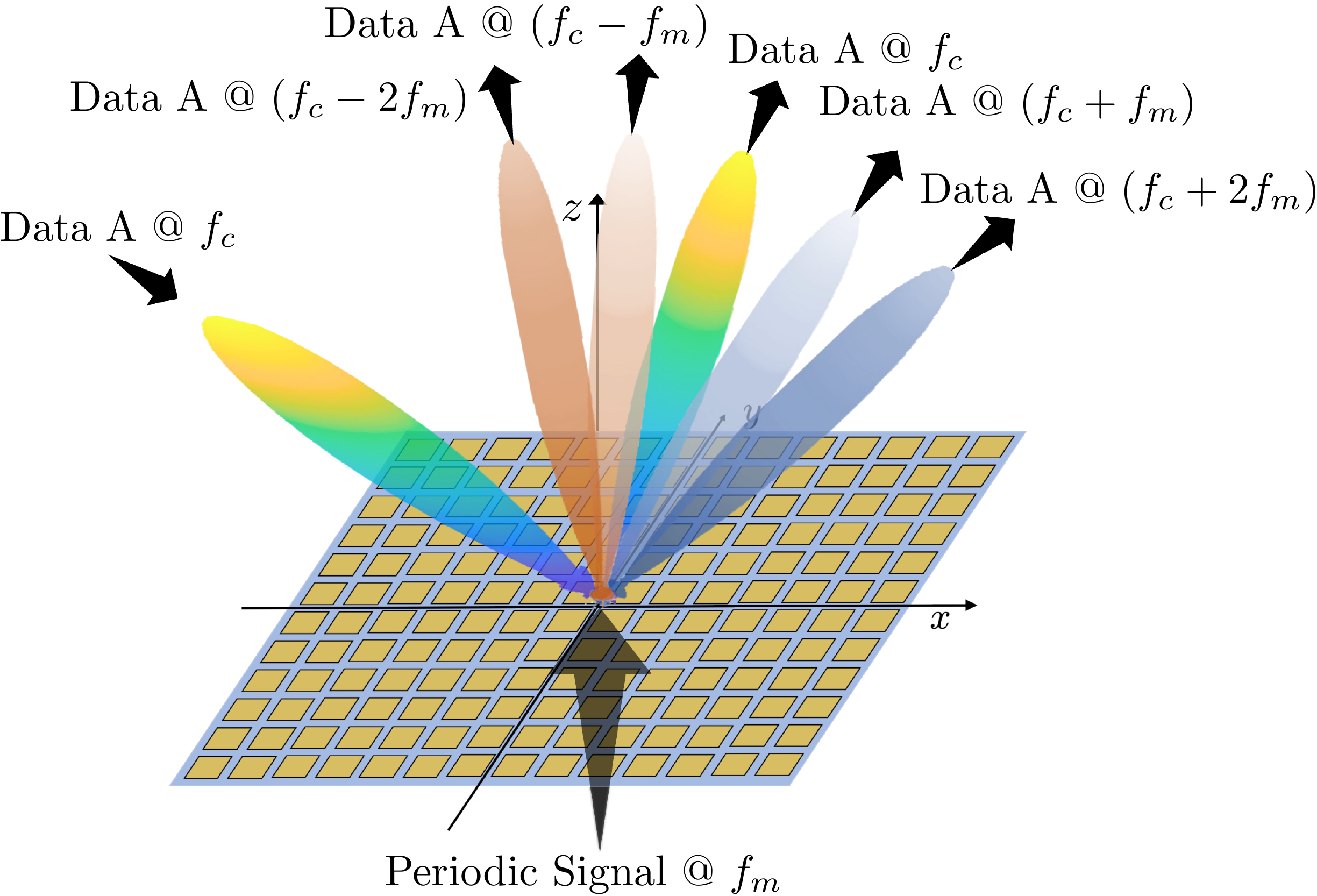}\label{sub_fig:Intro_STCM}}\hspace{0.01\textwidth}
    \subfloat[Proposed STMM]{\includegraphics[width=0.32\textwidth]{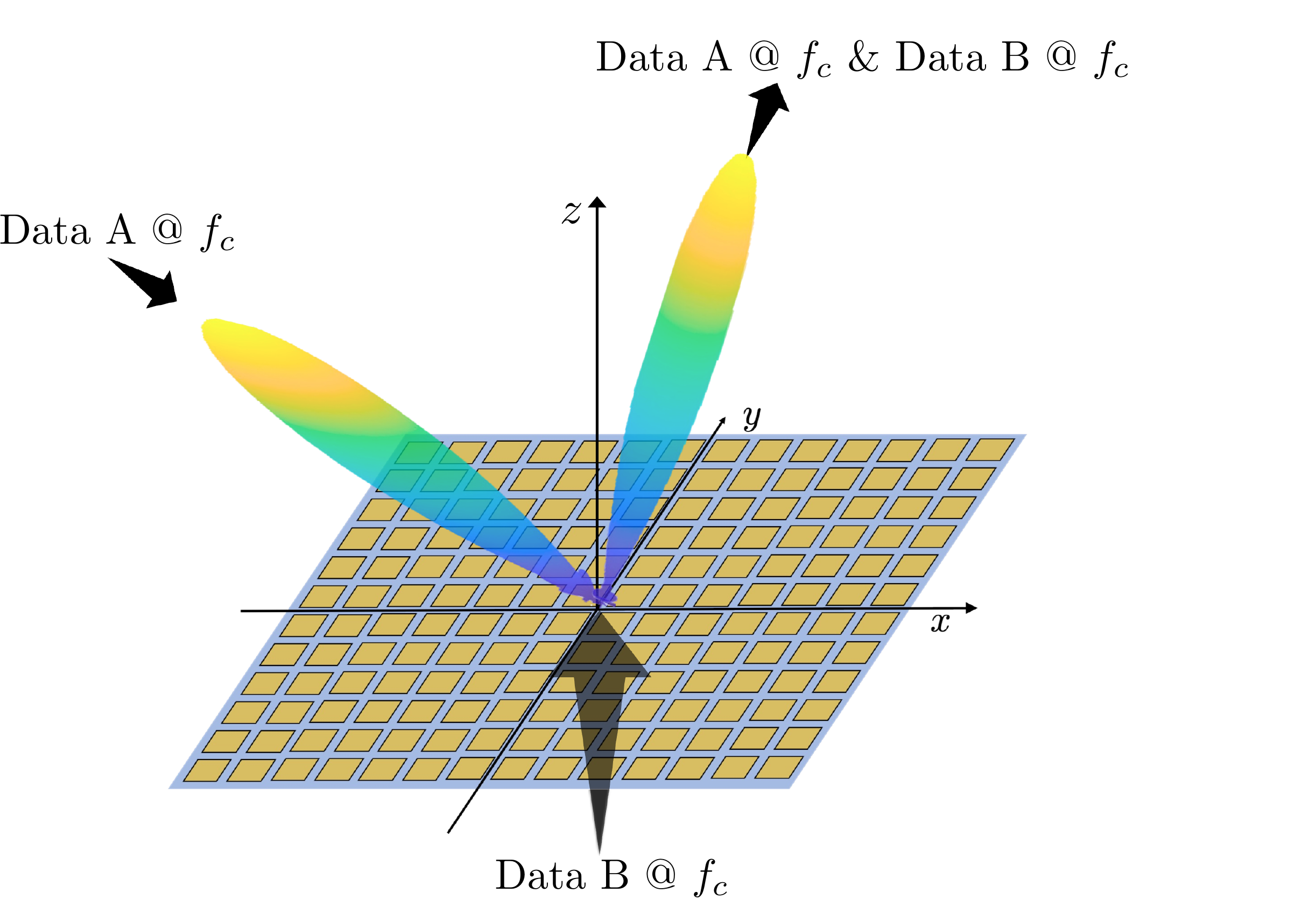}\label{sub_fig:Intro_STMM}}\hspace{0.01\textwidth}
    \caption{The application of metasurfaces has evolved from (\ref{sub_fig:Intro_RIS}) RIS, which allows control of the direction of reflection, to (\ref{sub_fig:Intro_STCM}) STCM, which allows more advanced reflection and control of the reflected frequency, to (\ref{sub_fig:Intro_STMM}) STMM, which can convey information superposed to the reflected signal.}
    \label{fig:IntroFigs}
\end{figure*}

In 2014, the authors in \cite{hadad2015space} proposed the \textit{universal Snell's law} of reflection realizing efficient solar panels. The key idea is to break the energy conservation law by adding a temporal phase gradient across the metasurface and then combining the spatial and temporal phase gradients in the metasurface to create anomalous nonreciprocal reflections. This principle is at the foundation of space-time modulated metasurfaces (STMMs) and generalizes RISs, opening up the possibility of conveying information in the reflected EM waves \cite{taravati2022microwave}. Fig. \ref{fig:IntroFigs} depicts the evolution of metasurfaces from RIS to STCM and STMM.
The authors in \cite{IJ133_NatComm_9_4334_2018, Zhang2020, Zhang2021, GengBo_STCM2022} used a spatial-temporal gradient to create space-time coded metasurfaces (STCMs), which can both reflect the signal in space and shift its frequency, generating harmonics.
Herein, STMMs differ from STCMs in that they are intended to convey information through reflection, whereas STCMs are primarily designed to synthesize time-varying reflection patterns.
In this latter direction, the work in \cite{9133266} proposes an STMM-based MIMO transmitter that operates on a sinusoidal feed and encodes data using the harmonics of a properly designed phase-discontinuous symbol. However, the interaction between spatial and temporal phases has never been investigated. Nevertheless, the theoretical findings of \cite{9133266} are validated by experimental results in a $2 \times 2$ MIMO setting, demonstrating a rate of 20 Mbps. The authors of \cite{8901437} discuss the realization of advanced modulation schemes by engineering the phase response of the single STMM meta-atom.
From a system-level perspective, a wireless communication system using a time-modulated metasurface to convey information can be conceptually considered a backscatter communication system, where a backscatter device (\textit{tag}) transmits information to an intended receiver by reflecting the signal impinging from an external transmitter \cite{Liang2022_RIS_backscatter_survey}. The authors of \cite{Alouini2023_LoRa} outline the potential of backscatter communications for low-power Internet of Things applications, providing analytical tools to evaluate the system performance for a generic backscattering tag (with possibly amplification capabilities). The generalized space shift keying for ambient backscatter communication systems is proposed in \cite{Tellambura2022_ambientbackscatter}, where the RIS switches the reflected signal between two receiving points, encoding the information in the direction of reflection. 
Among the relevant contributions using RISs, the authors of \cite{Park2020} propose to enhance ambient backscatter communications with a phase-modulated RIS employing a binary phase shift keying modulation. The work \cite{Jingtao2022_IRS_backscattering} proposes a RIS-based backscatter communication system to realize computational task offloading in energy-constrained networks, harvesting part of the energy of the impinging signal to enable the conveyance of information through reflection. 
The work in \cite{9516949}, instead, describes a RIS-based beamforming-plus-phase modulation system to serve a user with information augmentation, focusing on beamforming and detection performance.

All of the preceding works either discuss the physical realization or detail the specific applications of STCMs/STMMs (referred to as time-modulated RISs in backscatter communication literature), but never explicitly account for space-time phase coupling. By generalizing the RIS principle, stemming from the physical principle ruling the space-time modulation of metasurfaces, i.e., the universal Snell's law, this paper aims to demonstrate the potential and associated challenges of an STMM-based wireless communication system. Herein, we consider a master unit (MU), for example, a base station (BS), and a slave unit (SU), e.g., user equipment (UE), in a generic full-duplex system (MU $\curvedrightleft$ SU). The phase control of STMM enables the superposition of a MU $\leftarrow$ SU signal to a MU $\rightarrow$ SU one originating at the MU.
Full-duplex technologies and systems are of great interest because they will enable many 6G services, and the potential applications and challenges are discussed in \cite{6832464}. In the proposed study case, the link MU $\rightarrow$ SU uses a portion of the bandwidth $B_d$ in the spectrum. The SU uses the STMM to \textit{retro-reflect and modulate} the received MU $\rightarrow$ SU signal with a phase-only MU $\leftarrow$ SU signal with a bandwidth $B_u$. The resulting signal (MU $\curvedrightleft$ SU) occupies a total bandwidth of $B_{tot}=B_d+B_u$, as detailed in Section \ref{sect:FDComm}.
The SU relies solely on the STMM to encode the information, opening up intriguing applications in severe energy-constrained scenarios.
To summarize, the main contributions of the paper are as follows: 
\begin{itemize}
    \item We propose a mathematical model for STMM-based wireless communication (MU $\curvedrightleft$ SU). The SU terminal retro reflects the signal adding information to the MU by time-modulating the phase of the STMM. One of the major innovations proposed in this paper is the use of a modulated feed signal at the STMM for signal MU $\leftarrow$ SU over retro reflections. 
    
    \item We analytically characterize the coupling between the spatial and temporal phases in STMM, evaluating their practical impact on the proposed communication model. Noteworthy, none of the current literature works consider the coupling of temporal and spatial phase gradients. We discuss two possible STMM implementations, one allowing the phase of each meta-atom of the STMM to be changed in both time and space and the other, cost-effective, that only requires a single time-varying component, that provides a temporal phase gradient common to all the meta-atoms. 
    Our results show that space-time coupling strongly limits the reception angles, the size of the metasurface, and the MU $\leftarrow$ SU communication rate, giving rise to a cut-off to retro-reflection MU $\leftarrow$ SU bandwidth limit. Further, we also formalize a method for space-time coupling compensation within the cut-off region, enabling a flexible and reliable implementation of the proposed communication model.

    \item We analytically derive the upper-bound performance of the proposed model to form the guidelines for proper STMM design based on the main design parameters, i.e., transmitted power, MU-SU distance, and size of the metasurface. 
    
    \item We analyze the maximum spectral efficiency achievable of MU $\curvedrightleft$ SU considering the generic phase modulation for retro-reflection, namely continuous phase modulation (CPM) with and without countermeasures for space-time coupling. We also analytically and numerically evaluate the impact of channel estimation errors on the MU $\leftarrow$ SU spectral efficiency, affecting the spatial phase configuration of the STMM and related space-time phase decoupling.
\end{itemize}    

\textit{Organization}: The remainder of the paper is organized as follows: Section \ref{sect:AnomalousNonreciprocal} outlines the fundamental equations describing the STMM behavior, system and channel model are presented in Section \ref{sect:SysModel}, Section   \ref{sec:STCDesign} discusses the design principles and challenges of the STMM, while UL signal design is presented in Section \ref{sect:FDComm}. Finally, numerical results and conclusions are in Section \ref{sect:numerical_results} and Section \ref{sect:conclusion}, respectively.

\textit{Notation}: Bold upper- and lower-case letters describe matrices and column vectors. The $(i,j)$-th entry of matrix $\mathbf{A}$ is denoted by $[\mathbf{A}]_{(i,j)}$. Matrix transposition, conjugation, conjugate transposition, and Frobenius norm are indicated respectively as $\mathbf{A}^{\mathrm{T}}$, $\mathbf{A}^{*}$, $\mathbf{A}^{\mathrm{H}}$ and $\|\mathbf{A}\|_F$. $\otimes$ denotes the Kronecker (tensor) froduct between matrices, $\mathrm{tr}\left(\mathbf{A}\right)$ extracts the trace of $\mathbf{A}$. $\mathrm{diag}(\mathbf{A})$ denotes the extraction of the diagonal of $\mathbf{A}$, while $\mathrm{diag}(\mathbf{a})$ is the diagonal matrix given by vector $\mathbf{a}$. $\mathbf{I}_n$ is the identity matrix of size $n$. With  $\mathbf{a}\sim\mathcal{CN}(\boldsymbol{\mu},\mathbf{C})$ we denote a multi-variate circularly complex Gaussian random variable $\mathbf{a}$ with mean $\boldsymbol{\mu}$ and covariance $\mathbf{C}$. $\mathbb{E}[\cdot]$ is the expectation operator, while $\mathbb{R}$ and $\mathbb{C}$ stand for the set of real and complex numbers, respectively. $\delta_{n}$ is the Kronecker delta.

\section{Anomalous Nonreciprocal Metasurface}\label{sect:AnomalousNonreciprocal}

\begin{figure}[t]
    \centering
    \includegraphics[width=0.6\columnwidth]{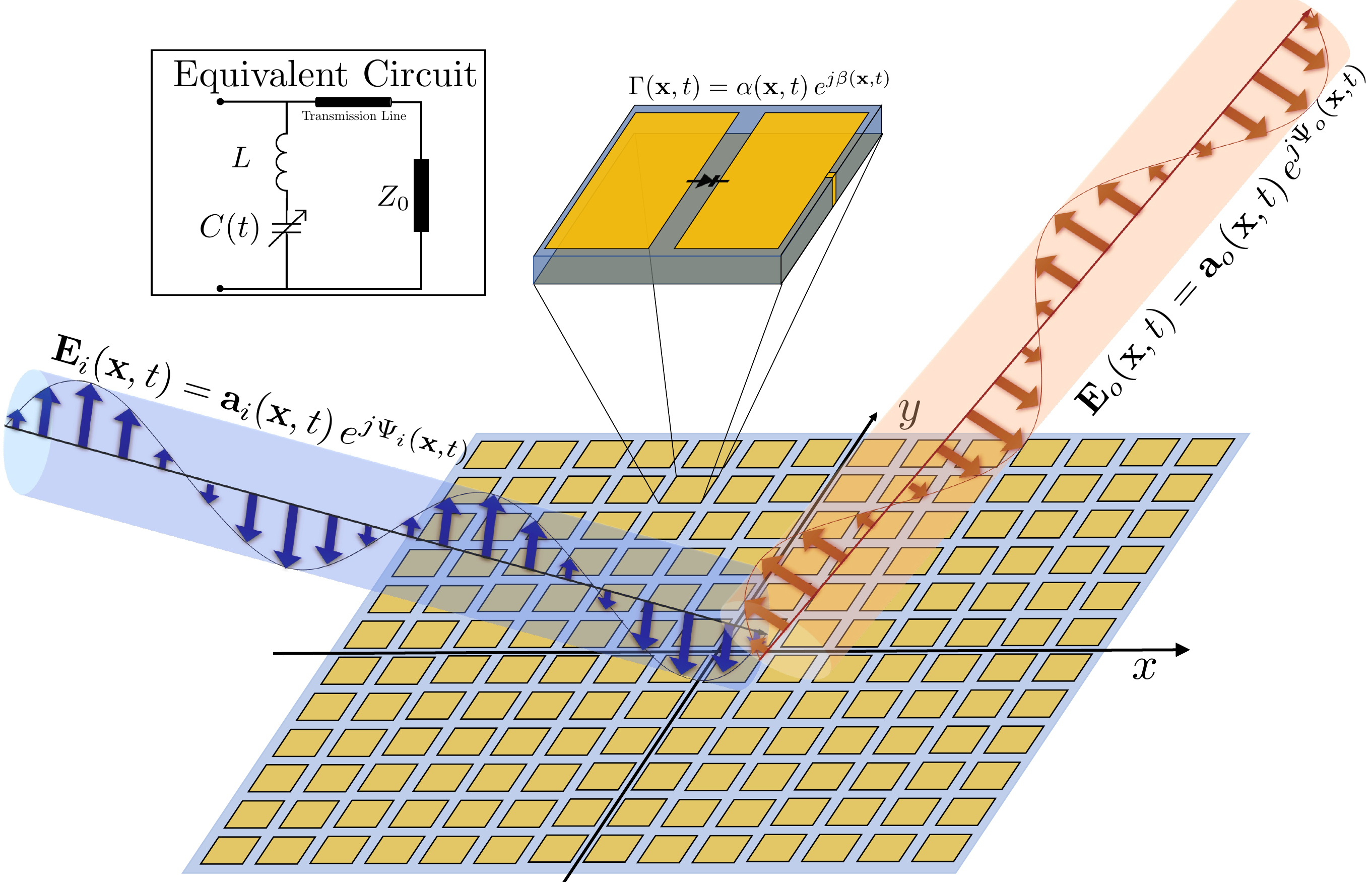}
    \caption{2D STMM Scheme.}
    \label{fig:refSystem}
\end{figure}

This section outlines the fundamental equations describing the behavior of an STMM. The universal Snell's law is a generalization when dealing with both spatial and temporal phase gradients across a metasurface. 
Let us consider the scheme depicted in Fig. \ref{fig:refSystem}, where a plane EM wave $\mathbf{E}_i$ is impinging on a 2D planar metasurface deployed on the $xy$ plane. The incident wave can be expressed as 
\begin{equation}\label{eq:IncidentWave1}
    \mathbf{E}_i(\mathbf{x},t) = \mathbf{a}_i \, e^{j\Psi_i(\mathbf{x},t)},
\end{equation}
where $\mathbf{a}_i$ is the amplitude of the wave, whereas phase $\Psi_i(\mathbf{x},t)$ is fast varying in space and time. 
Quantities 
\begin{align}
        \frac{1}{2\pi}\pdv{t}{\Psi_i(\mathbf{x},t)} & = f_i\label{eq:fi}\\
        \nabla_\mathbf{x} \Psi_i(\mathbf{x},t) & = \mathbf{k}_i,\label{eq:Ki}.
\end{align}
are, respectively, the frequency $f_i$ and the wavevector $\mathbf{k}_i$ of the incident wave, whose expression in the adopted reference system (Fig. \ref{fig:refScen}) is 
\begin{equation}\label{eq:inc_K}
    \mathbf{k}_i = \frac{2 \pi}{\lambda_i} \left[\cos\theta_i \cos\phi_i, \cos\theta_i \sin\phi_i, \sin\theta_i\right]^T
\end{equation}
for wavelength $\lambda_i=c/f_i$ ($c$ is the wave speed).
Let us define the reflection coefficient of the metasurface as
\begin{equation}\label{eq:rm}
    \Gamma(\mathbf{x},t) = \alpha(\mathbf{x},t) \, e^{j\beta(\mathbf{x},t)},
\end{equation}
where $\alpha(\mathbf{x},t)$ and $\beta(\mathbf{x},t)$ denote the time- and space-varying amplitude and phase of the reflection coefficient of the STMM, respectively. Herein, the metasurface is a phase-only device, thus $\alpha(\mathbf{x},t)=\alpha\neq 0$ on the metasurface area. Similarly to \eqref{eq:IncidentWave1}, the reflected wave can be generally expressed as
\begin{equation}\label{eq:ReflectedWave}
    \mathbf{E}_o(\mathbf{x},t) = \mathbf{a}_o \, e^{j\Psi_o(\mathbf{x},t)},
\end{equation}
with $\mathbf{a}_o\approx \mathbf{a}_i$ under the assumption of ideal reflection, i.e., $\alpha = 1 \, \forall \mathbf{x},\, t$. The reflected wave is characterized by wavevector $\mathbf{k}_o$ and frequency $f_o$, that follow from the universal Snell's law as \cite{IJ133_NatComm_9_4334_2018}:
\begin{align}
    \mathbf{k}_o - \mathbf{k}_i &= \left[\pdv{x}{\beta(\mathbf{x},t)},\pdv{y}{\beta(\mathbf{x},t)}, 0\right]^\mathrm{T}, \label{eq:spaceGradient}\\
    f_o - f_i &= \frac{1}{2\pi} \pdv{t}{\beta(\mathbf{x},t)}.\label{eq:timeGradient}
\end{align}
As it can be observed in \eqref{eq:timeGradient}, any spatial variation of the phase $\beta(\mathbf{x},t)$ produces a variation of the instantaneous wavevector $\mathbf{k}_o$, and a time variation of the phase $\beta(\mathbf{x},t)$ induces a variation of the instantaneous frequency $f_o$ in the reflected wave.
The design of $\beta(\mathbf{x},t)$ to control both direction $(\mathbf{k}_o)$ and frequency $(f_o)$ of the reflected wave is discussed in Section \ref{sec:STCDesign}.

\section{System and Channel Model}\label{sect:SysModel}

\begin{figure*}[t!]
    \centering
    \includegraphics[width=0.98\textwidth]{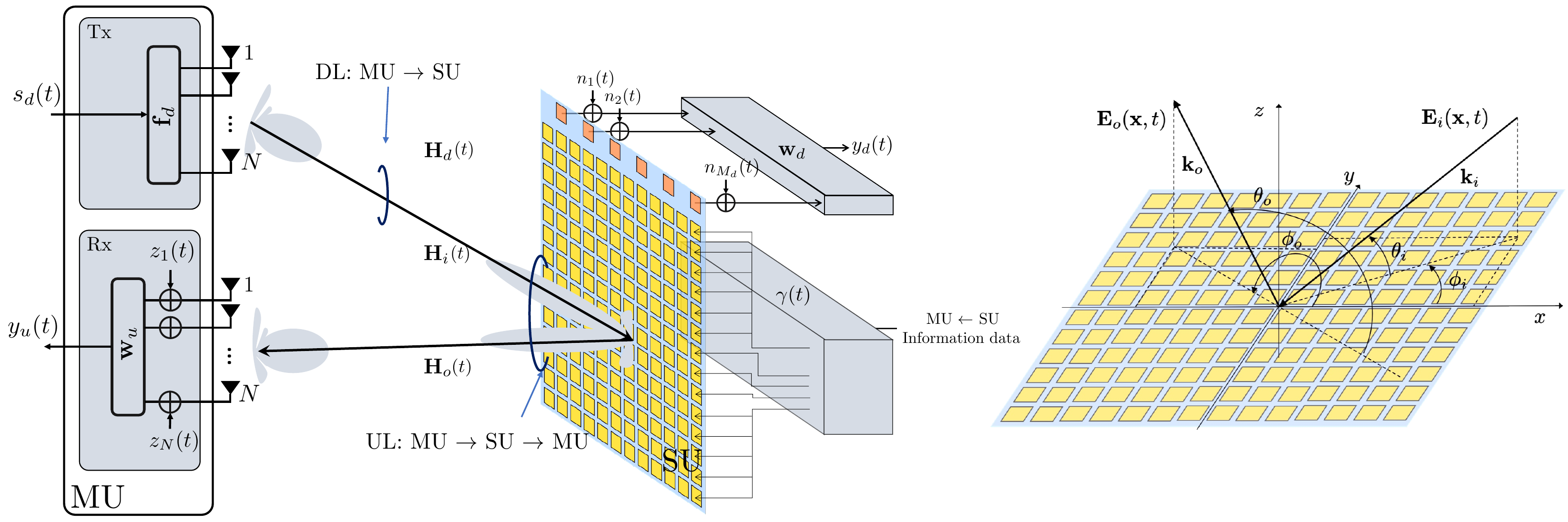}
    \caption{Proposed system model}
    \label{fig:refScen}
\end{figure*}
Let us consider the reference scenario depicted in Fig. \ref{fig:refScen}, where a generic MU establishes a full-duplex link with a SU. To simplify, the MU is equipped with two uncoupled antenna arrays of $N$ meta-atoms each, to enable full-duplex capabilities at MU. The SU is equipped with a planar STMM of $M_u=M_{u,x}\times M_{u,y}$ meta-atoms (along $x$ and $y$ respectively), and an Rx antenna array of $M_d$ meta-atoms. We assume that the STMM and the Rx array are co-located at the SU. The assumption concerning the SU is meant to conceptually separate for simplicity the device that receives MU $\rightarrow$ SU data from the MU $\leftarrow$ SU data stream via retro-reflection. Note that, it is possible to make a single device capable of both receiving and transmitting at the same time, for example, by connecting some meta-atoms of the metasurface to one (or more) RF chain \cite{Schroeder2022_hybridRIS}. In this case, the analytical derivations would be slightly different, but the results are equivalent to those investigated herein.

\subsection{Signal model}
The time-continuous signal from the MU can be generally expressed as: 
\begin{equation}\label{eq:Tx_signal_w_precoding}
    \mathbf{x}(t) = \mathbf{f}_d\;s_d(t)e^{j2\pi f_i t}
\end{equation}
where $\mathbf{f}_d\in\mathbb{C}^{N\times 1}$ is the spatial precoding vector and $s_d(t)$ is the MU $\rightarrow$ SU time-continuous base-band information signal with bandwidth $B_d$, centered around the carrier frequency $f_i$. 
The MU $\rightarrow$ SU signal propagates over a high-frequency channel \cite{rappaport2019wireless}, yielding a received signal at the SU:
\begin{equation}\label{eq:dlSignal}
\begin{split}
    y_d(t) & = \frac{1}{\sqrt{\varrho_d}} \, \mathbf{w}_d^H \mathbf{H}_d(t)* \mathbf{x}(t) + \mathbf{w}_d^H \mathbf{n}(t) 
\end{split}
\end{equation}
where $\varrho_d$ represents the MU $\rightarrow$ SU path loss in power, * is the matrix convolution,  $\mathbf{w}_d\in\mathbb{C}^{N\times 1}$ is the SU spatial combiner, $\mathbf{H}_d(t)\in \mathbb{C}^{M_d \times N}$ is the MU $\rightarrow$ SU MIMO channel matrix, such that $\mathrm{E}\left[\|\mathbf{H}_d(t)\|_\mathrm{F}^2\right] = N M_d$ while  $\mathbf{n}(t)\sim\mathcal{CN}(\mathbf{0},\sigma^2_{n}\mathbf{I}_{M_d} \delta(t))$ is the additive noise, spatially and temporally uncorrelated.

The Rx signal model at the MU can be evaluated by considering the propagation towards the STMM and its retro-reflection to the MU. The MU $\rightarrow$ SU signal at the STMM is
\begin{equation}\label{eq:STCMSignal}
    \mathbf{r}(t) = \mathbf{H}_i(t) * \mathbf{x}(t) 
\end{equation}
where $\mathbf{H}_i(t)\in\mathbb{C}^{M_u\times N}$ is the \textit{forward} MIMO channel between the MU and the STMM. The signal $\mathbf{r}(t)$ undergoes the STMM phase reflection and then the convolution with the \textit{backward} (MU $\leftarrow$ SU) MIMO channel $\mathbf{H}_o(t)\in\mathbb{C}^{N\times M_u}$, obtaining:
\begin{equation}\label{eq:ulSignal}
\begin{split}
       y_u(t) & = \frac{1}{\sqrt{\varrho_u}} \mathbf{w}_u^H \mathbf{H}_o(t) * \boldsymbol{\Gamma}(t)\mathbf{r}(t) + \mathbf{w}_u^H \mathbf{z}(t) = \\  
       & = \frac{1}{\sqrt{\varrho_u}} \mathbf{w}_u^H \mathbf{H}_o(t) * \boldsymbol{\Gamma}(t)\left(\mathbf{H}_i(t) * \mathbf{x}(t) \right) +\\
       &+ \mathbf{w}_u^H \mathbf{z}(t),
\end{split}
\end{equation}
where $\varrho_u$ is the UL path loss in power, $\mathbf{w}_u\in\mathbb{C}^{N\times 1}$ is the uplink combiner at the MU, $\mathbf{z}(t)\sim\mathcal{CN}(\mathbf{0},\sigma^2_{z}\mathbf{I}_{N} \delta(t))$ is the additive noise at the MU array and $\boldsymbol{\Gamma}(t) \in \mathbb{C}^{M_u \times M_u}$ denotes the time-varying STMM phase pattern matrix
\begin{equation}\label{eq:refMtx}
    \mathbf{\Gamma}(t) = \mathrm{diag}\left(\left[e^{j\beta_{0,0}(t)}, \cdots,  e^{j\beta_{M_{u,x}-1, M_{u,y}-1}(t)}\right]\right).
\end{equation}
%
where the time-varying phase matrix is obtained by discretizing the reflection coefficient in \eqref{eq:rm} over the spatial dimension, i.e., the STMM is modeled as a modulated reflectarray with regularly spaced meta-atoms. In particular, the relative position of the $(q,v)$-th meta-atom of the STMM w.r.t. to the phase center is $\mathbf{p}_{q,v} = [q\,d_x, v\,d_y,0]^T$, for $q = 0,..., M_{u,x}-1$, $v=0,..., M_{u,y}-1$, being $d_x$ and $d_y$ being the meta-atoms spacing along the $x$ and $y$ axes, respectively. Notice that, although $\mathbf{\Gamma}(t)$ is a time-varying function, the relation with $\mathbf{r}(t)$ is multiplicative because the time-varying signal injected by the STMM is a reflection coefficient.

\textit{Remark 1}: Precoding and combining vectors, i.e., $\mathbf{f}_d$ and ($\mathbf{w}_d$, $\mathbf{w}_u$), are herein based on the knowledge of channels $\mathbf{H}_d(t)$ and $\mathbf{H}_i(t)$, $\mathbf{H}_o(t)$ at both MU and SU side. Notice that channel estimation at the SU can be obtained through conventional approaches and shared with the MU through feedback~\cite{channelMiz}.

\textit{Remark 2}: Decoding of the MU $\leftarrow$ SU information signal in $\beta_{q,v}(t)$ is possible thanks to the knowledge of the MU $\rightarrow$ SU one $s_d(t)$, that shall be removed by the MU to obtain $\beta_{q,v}(t)$ (not analyzed here). 

\subsection{Channel Model}\label{subsec:chModel}

The cluster-based multipath channel model is widely used in the design of mmWave and sub-THz wireless communication systems because it effectively captures the unique propagation characteristics of these frequencies \cite{Meijerink2014SalehValenzuela}. This model is used here since this paper is framed for these frequencies. The $(m,n)$-th entry of any of the channel impulse responses in \eqref{eq:dlSignal}-\eqref{eq:ulSignal} can be written as
\begin{align}
    [\mathbf{H}(t)]_{m,n} & = \sum_{p=1}^P \xi_p \,\delta\left(t-\tau_p-\Delta t_{p,n}-\Delta t_{p,m}\right) \label{eq:inChannel}
\end{align}
where \textit{(i)} $P$ is the number of paths, \textit{(ii)} $\xi_p\sim \mathcal{CN}\left(0, \sigma_p^2\right)$ is the scattering amplitude of $p$-th path, such that $\sum_{p=1}^P \sigma_p^2 = 1$, \textit{(iii)} $\tau_p$ is the propagation delay of the $p$-th path from the phase center of the Tx to the phase center of the Rx, \textit{(iv)} $\Delta t_{p,n}$ and $\Delta t_{p,m}$ are the residual propagation delays due to the position of the $n$-th Tx meta-atom and the $m$-th Rx meta-atom w.r.t. to their phase centers.
As common for communication systems operating through back-scattering, the two-way channel to/from the STMM is characterized by a single dominant path, as for radar systems \cite{manzoni2022motion}. However, the technical extent of this work is not limited by the latter assumption. Multiple bounces (i.e., multipath forward $\mathbf{H}_i(t)$ and backward $\mathbf{H}_o(t)$ channels) can be considered, but \textit{(i)} their path loss is usually higher (or much higher) w.r.t. the direct MU $\curvedrightleft$ SU path, except for pure reflections in the environment (e.g., walls) and \textit{(ii)} both MU and SU are spatially selective, thus multipath components outside the main radiation lobe of the MU (or SU) would be strongly attenuated. The combination of spatial selectivity and increased path-loss allows us to reasonably assume $P=1$ for both forward $\mathbf{H}_i(t)$ and backward $\mathbf{H}_o(t)$ channels. Considering a single path yields a common propagation delay $\tau_p=\tau=D/c$ between MU and SU  and residual delays at the STMM that depend on the incidence and reflection angles $\theta = \theta_o = \theta_i$, $\phi= \phi_o = \phi_i$ (coincident for full-duplex settings such as the one depicted in Fig. \ref{fig:refScen}). The expression of the residual delay $\Delta t$ at the STMM is shown in Section \ref{subsect:principles}. The inter-meta-atom spacing of Tx and Rx arrays (at both MU and SU) is $\lambda_i/2$, while at STMM is $d_x=d_y=\lambda_i/4$. 
The propagation path losses in power, $\varrho_d$ and $\varrho_u$ in \eqref{eq:dlSignal}-\eqref{eq:ulSignal} are modeled as \cite{Ellingson2021}
\begin{equation}\label{eq:pathloss}
    \varrho_d = \frac{2^4 \pi D^2}{\lambda_i^2}, \quad \quad \varrho_u = \frac{2^{12} \pi D^4}{\lambda_i^4}
\end{equation}
where $D$ is the MU-SU distance. The model in \eqref{eq:pathloss} is obtained by assuming that the effective area of the MU/SU arrays is equal to the physical one, which is approximated as $N(\lambda_i/2)^2$, and that the retro-reflected cross-section of the STMM is equal to the one of a perfect metallic plate with an area of $M_u (\lambda_i/4)^2$ as if it is oriented towards the MU \cite{Ellingson2021}. 

\textit{Remark}: The MU $\leftarrow$ SU path loss $\varrho_u$ is here assumed as only dependent on the wavelength of the incident signal $\lambda_i$. As detailed in Section \ref{sec:STCDesign}, this is true for STMM phases $\beta_{q,v}(t)$ considered as stochastic, whose average time derivative is zero, i.e., $\mathbb{E}[\mathrm{d}\beta_{q,v}(t)/\mathrm{d}t] = 0 $, as common for information-bearing signals. Differently, for $\mathrm{d}\beta_{q,v}(t)/\mathrm{d}t \neq 0$, (e.g., a simple frequency shifting), the denominator of $\varrho_u$ would be $\lambda_i^2 \lambda_o^2$, with $\lambda_o$ derived from \eqref{eq:timeGradient}.

\section{Temporal Modulation}\label{sec:STCDesign}

\begin{figure}[t]
    \centering
    \subfloat[Architecture A]{\includegraphics[width=0.45\textwidth]{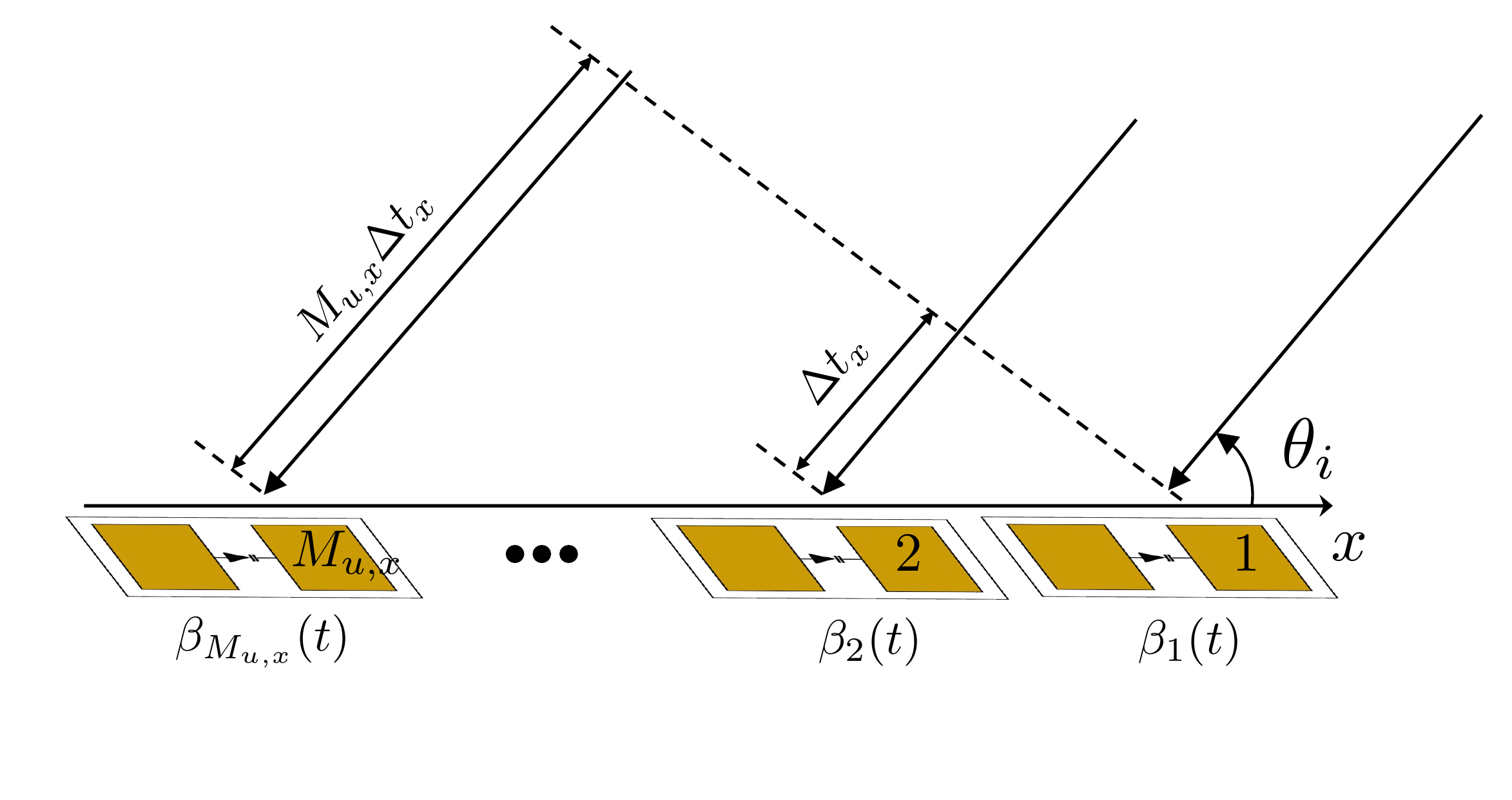}\label{subfig:ArchA}}\hspace{0.5cm}
    \subfloat[Architecture B]{\includegraphics[width=0.45\textwidth]{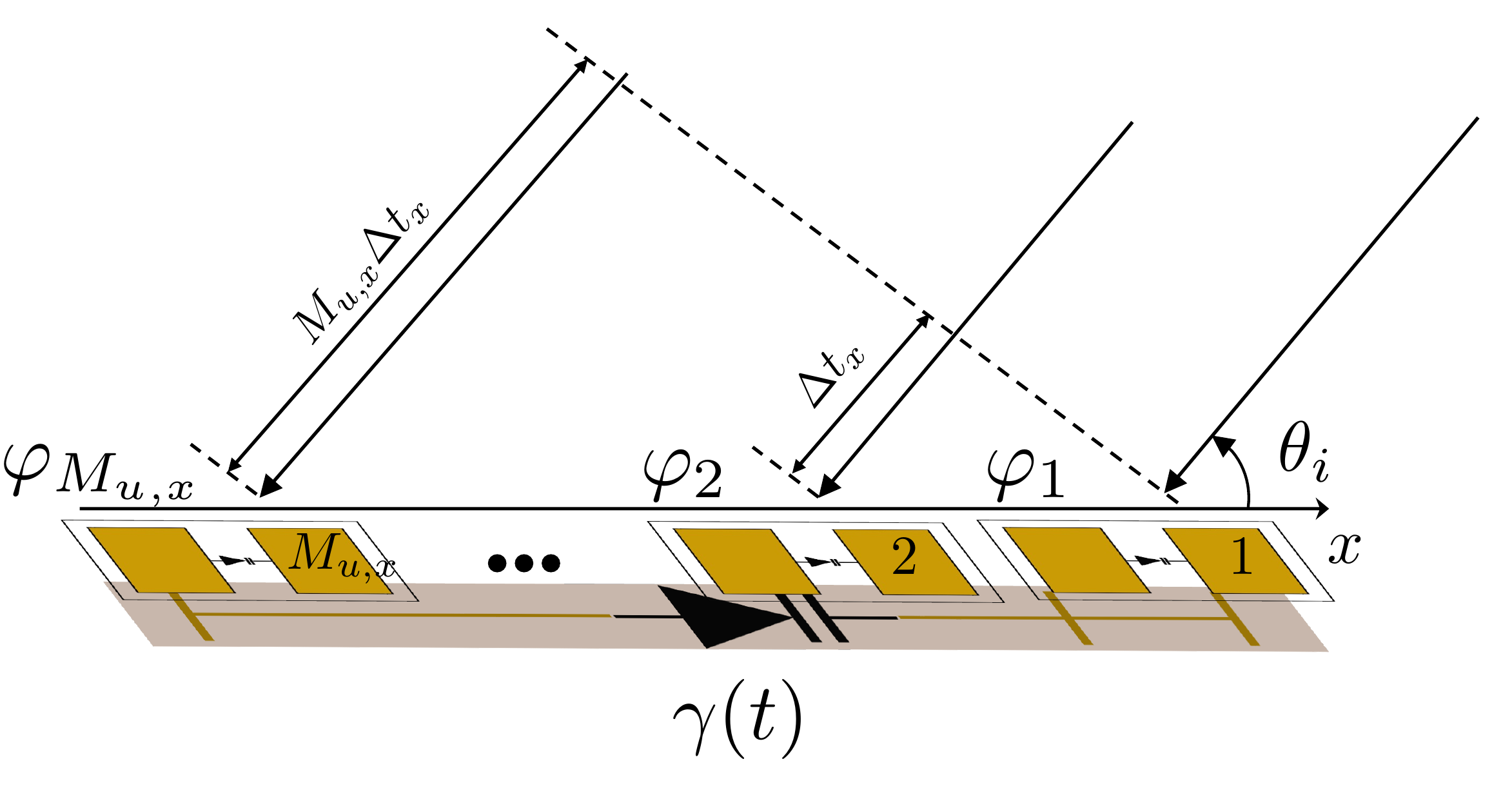}\label{subfig:ArchB}}
    \caption{Representation of a 1D plane wave impinging on the STMM for (\ref{subfig:ArchA}) Architecture A, considering each meta-atom with tunable space-time capabilities, and  (\ref{subfig:ArchB}) Architecture B, meta-atoms spatially tunable and a single tunable component shared across the whole metasurface.}
    \label{fig:SPhaseDist}
\end{figure}

This section defines the principles of temporal modulation at the STMM and its peculiarities. For the implementation of the STMM, we propose two potential architectures, as depicted in Fig. \ref{fig:SPhaseDist}. The first architecture, A, models each meta-atom with space-time tunability. Alternatively, architecture B models each meta-atom with spatial tunability and a single tunable component for temporal control shared with all meta-atoms. The difference between A and B is mainly the reduced cost/complexity of the second, which comes at the price of a slightly reduced performance, as will be discussed in the following. This dissimilarity arises from the fact that temporal control requires fast-varying varactors to support very high data rates in the MU $\leftarrow$ SU link, whereas spatial control can be achieved using cost-effective pin diodes. This discrepancy in component selection is attributed to the distinct operational demands of each control mechanism. To ease the derivation while maintaining the generality, we consider the MU to use precoding and combining vectors perfectly aligned with the SU. Thus, we have that $\mathbf{f}_d$ and $\mathbf{w}_u$ are given, e.g., estimated by conventional approaches~\cite{channelMiz}. 

\subsection{Principles of Temporal Modulation}\label{subsect:principles}
Let us consider the phase \eqref{eq:refMtx} applied at the STMM meta-atoms (architecture B) as the superposition of a spatial and a temporal component, such that at the $(q,v)$-th meta-atom is
\begin{equation}\label{eq:STPhase}
    \beta_{q,v}(t) = \varphi_{q,v} + \gamma(t),
\end{equation}
for $q=0,..., M_{u,x}-1$, $v=0,..., M_{u,y}-1$, where $\varphi_{q,v}$ is the space component of the phase used to design a complete retro-reflection of the impinging signal and $\gamma(t)$ is an information-bearing phase signal applied at each STMM meta-atom, and common to all. Model \eqref{eq:STPhase}, adopted in the current literature \cite{9133266, 9695952} can also be considered valid for architecture A when only one data stream is generated at the SU, as considered in this paper. 

Expanding \eqref{eq:ulSignal} with \eqref{eq:STPhase}, we obtain \eqref{eq:rxSignalExpanded}
\begin{figure*}[h!]
\begin{equation}\label{eq:rxSignalExpanded}
\begin{split}
    y_u(t) & = \rho  \sum_{q=0}^{M_{u,x}-1} \sum_{v=0}^{M_{u,y}-1} e^{-j4 \pi f_i \Delta t_{q,v}} e^{j\varphi_{q,v}} e^{j\gamma(t-\Delta t_{q,v}-\tau)} s_d(t - 2\Delta t_{q,v}-2\tau) + z(t) \overset{(a)}{\approx} \\
    & \overset{(a)}{\approx} \rho \, s_d(t-2\tau) \underbrace{\sum_{q=0}^{M_{u,x}-1} \sum_{v=0}^{M_{u,y}-1} e^{-j4 \pi f_i \Delta t_{q,v}} e^{j\varphi_{q,v}} e^{j\gamma(t-\Delta t_{q,v}-\tau)}}_{\text{Multiplicative MU$\leftarrow$SU channel}\, h_u(t)}  + z(t) \overset{(b)}{\approx}\\
    & \overset{(b)}{\approx} \rho \, s_d(t-2\tau) \sum_{q=0}^{M_{u,x}-1} \sum_{v=0}^{M_{u,y}-1} e^{-j4 \pi f_i \Delta t_{q,v}} e^{j\varphi_{q,v}} e^{j\gamma(t-\tau)} + z(t)
\end{split}
\end{equation}
\hrulefill
\end{figure*}
where $\rho= N^2 \xi_i\xi_o/\sqrt{\varrho_u}$ accounts for geometrical energy losses, while
\begin{equation}\label{eq:propagationDelay_2D}
    \Delta t_{q,v} = \frac{1}{c} \left(\frac{\mathbf{p}_{q,v}^T\mathbf{k}_i}{\|\mathbf{k}_i\|}\right)=  q \Delta t_x + v \Delta t_y
\end{equation}
is the residual propagation delay for the $(q,v)$-th STMM meta-atom, located in $\mathbf{p}_{q,v}= (\lambda_i/4) [q, v, 0]^T$ (local STMM coordinates), and 
\begin{equation}\label{eq:propagationDelay}
    \Delta t_x = \frac{\lambda_i}{4} \frac{\cos \theta \cos \phi}{c},\,\,\,\,\, \Delta t_y = \frac{\lambda_i}{4} \frac{\cos \theta \sin \phi}{c}
\end{equation}
are the propagation delays of the wavefront across two adjacent STMM meta-atoms displaced by $\lambda_i/4$ along $x$ and $y$. In \eqref{eq:rxSignalExpanded}, the first term is the complete expression of the Rx signal for MU $\rightarrow$ SU $s_d(t)$ and MU $\leftarrow$ SU $\gamma(t)$ signals. Approximation $(a)$ implies that there are no spatially wideband effects at the STMM due to the MU $\rightarrow$ SU signal, namely $s_d(t- 2\Delta T) \approx s_d(t)$, where $\Delta T = M_{u,x} \Delta t_x + M_{u,y} \Delta t_y$ is the maximum delay across the STMM meta-atoms. This is equivalent to asserting that:
\begin{equation}\label{eq:approxA}
    2\Delta T << T_d,
\end{equation}
where $T_d = 1/B_d$ is the symbol period of the MU $\rightarrow$ SU information signal $s_d(t)$.  In this setting, unwanted beam squinting from spatially wideband effects due to $s_d(t)$ can be avoided (see \cite{9399122} for details). With approximation $(a)$, the MU $\rightarrow$ SU signal undergoes a \textit{multiplicative} channel $h_u(t)$ that encodes the MU $\leftarrow$ SU phase signal $\gamma(t)$, whose structure is discussed in Section \ref{subsect:SGradientDistortion}.
Similarly, approximation $(b)$ subtends that the MU $\leftarrow$ SU phase-only signal does not suffer the relative propagation delay across the STMM, thus $\gamma(t-\Delta T )\approx \gamma(t)$, or, equivalently
\begin{equation}\label{eq:approxB}
     \Delta T \ll T_u,
\end{equation}
where $T_u$ is the symbol period of the MU $\leftarrow$ SU phase signal $\gamma(t)$. Approximation $(b)$ in \eqref{eq:rxSignalExpanded} has the further implication that the spatial phase component $\varphi_{q,v}$ is not affected by the temporal one $\gamma(t)$, i.e., they can be separated and, consequently, the reflection gain is maximized by a phase-only pattern (not a function of time) across the STMM:
\begin{align}\label{eq:SPhase}
    \varphi_{q,v} =  \pi (q \cos \theta \cos \phi + v \cos \theta \sin \phi),
\end{align}
leading to the multiplicative model \cite{9133266}
\begin{equation}\label{eq:rxSignalExpanded3}
    \begin{split}
        y_u(t) \approx \rho \, M_u \,  s_u(t-\tau)\,s_d(t-2\tau)  + z(t),
    \end{split}
\end{equation}
where $s_u(t-\tau)=e^{j\gamma(t-\tau)}$ denotes the MU $\leftarrow$ SU signal injected by the STMM \textit{modulating} the MU signal $s_d(t-2\tau)$.

However, model \eqref{eq:rxSignalExpanded3} that is typically assumed in the existing literature on STMM is valid in restricted settings (e.g., $\theta = \pi/2$ and $\phi = 0$, as in \cite{9133266}). Generally, neither approximation $(a)$ nor approximation $(b)$ hold for large-bandwidth systems and/or comparatively large STMM. Noteworthy, while the beam squinting due to $s_d(t)$ cannot be compensated (except by reducing the MU $\rightarrow$ SU bandwidth $B_d$), the coupling between spatial and temporal phase patterns for $\theta \neq \pi/2$ is peculiar to the STMM when involving wireless systems. The detrimental effect analyzed herein is the coupling between the spatial and temporal phases. Therefore, solutions to space-time phase coupling are detailed in Section \ref{subsect:SGradientDistortion} and proposing countermeasure for wideband MU $\leftarrow$ SU system is in Section \ref{subsect:Decoupling}. 

\subsection{Space-Time Phase Coupling} \label{subsect:SGradientDistortion}

The space-time phase coupling can be readily explained considering Fig. \ref{fig:SPhaseDist}. A plane wave impinging on the $(q,v)$-th meta-atom of the STMM with angle $(\theta,\phi)$, as depicted in Fig. \ref{fig:SPhaseDist}, will be retro-reflected with the following phase: 
\begin{equation}\label{eq:truePhaseShift}
    \widetilde{\beta}_{q,v}(t) = \varphi_{q,v} + \gamma(t - \Delta t_{q,v} ),
\end{equation}
thus the phase $\gamma(t)$ experiences a meta-atom-specific time delay that depends on the incidence angles $(\theta,\phi)$, which induces a spurious, and undesired, spatial phase gradient. 
This phenomenon leads to an unwanted decrease in the reflection gain. In other words, assuming the conventional spatial phase gradient detailed in \eqref{eq:SPhase}, we have that the following theorem holds:
\begin{theorem}\label{th:MD}
Let $h_u(t)$ be the channel defined as in \eqref{eq:rxSignalExpanded} and let the phase be defined as \eqref{eq:STPhase}, with $\varphi_{q,v} = \pi (q \cos \theta \cos \phi + v \cos \theta \sin \phi)$, $u=0,...,M_{u,x}-1$, $v=0,...,M_{u,y}-1$. Then, the following inequality holds:
\begin{equation}\label{eq:theorem1}
    \mathbb{E}_\gamma\left[\lvert h_u(t) \rvert^2 \right] \leq M_u^2.
\end{equation}
\end{theorem}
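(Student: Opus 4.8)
The plan is to exploit the explicit expression for $h_u(t)$ given in \eqref{eq:rxSignalExpanded} and to recognize that the prescribed spatial phase $\varphi_{q,v}$ is engineered precisely to cancel the carrier-induced round-trip propagation phase $e^{-j4\pi f_i \Delta t_{q,v}}$. First I would substitute the delay definitions \eqref{eq:propagationDelay_2D}--\eqref{eq:propagationDelay} into the exponent and use $f_i = c/\lambda_i$. A one-line computation then yields $4\pi f_i \Delta t_x = \pi \cos\theta\cos\phi$ and $4\pi f_i \Delta t_y = \pi\cos\theta\sin\phi$, so that $4\pi f_i \Delta t_{q,v} = \pi(q\cos\theta\cos\phi + v\cos\theta\sin\phi) = \varphi_{q,v}$. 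Consequently the product $e^{-j4\pi f_i \Delta t_{q,v}}\,e^{j\varphi_{q,v}}$ equals unity for every $(q,v)$.

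After this cancellation the channel reduces to a sum of unit-modulus phasors,
\[
    h_u(t) = \sum_{q=0}^{M_{u,x}-1}\sum_{v=0}^{M_{u,y}-1} e^{j\gamma(t-\Delta t_{q,v}-\tau)},
\]
whose only residual dependence on space is through the delayed arguments of $\gamma$. The next step is a triangle-inequality bound: since each summand has modulus one, $\lvert h_u(t)\rvert \le \sum_{q,v} 1 = M_{u,x}M_{u,y} = M_u$, and hence $\lvert h_u(t)\rvert^2 \le M_u^2$ for every $t$ and every realization of $\gamma$.

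Finally, because this bound is deterministic and uniform in $t$, taking the expectation over $\gamma$ preserves it and immediately gives $\mathbb{E}_\gamma[\lvert h_u(t)\rvert^2]\le M_u^2$. I would close by recording the equality condition, which clarifies the physical meaning: the bound is attained if and only if all delayed copies $\gamma(t-\Delta t_{q,v}-\tau)$ coincide, i.e.\ either $\gamma$ is constant (no uplink information) or all residual delays vanish, $\Delta t_{q,v}=0$, which is the broadside case $\theta=\pi/2$. This makes explicit that the $\Delta t_{q,v}$-dependent dispersion of $\gamma$ across the meta-atoms --- the space-time coupling --- is exactly what degrades the reflection gain below its nominal peak $M_u^2$.

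I do not expect a genuine obstacle; the one step I would verify most carefully is the phase-cancellation identity $4\pi f_i \Delta t_{q,v}=\varphi_{q,v}$, since the factor of two from the round trip and the quarter-wavelength meta-atom spacing $\lambda_i/4$ must combine correctly for the retro-reflection design to be self-consistent. Everything after that is the elementary observation that a sum of $M_u$ phasors of unit modulus cannot exceed $M_u$ in magnitude.
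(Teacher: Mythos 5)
Your proposal is correct and follows essentially the same route as the paper's proof: after the cancellation $e^{-j4\pi f_i \Delta t_{q,v}}e^{j\varphi_{q,v}}=1$ (which the paper leaves implicit as ``inspection of the structure of $h_u(t)$''), the bound is just the triangle inequality on $M_u$ unit-modulus phasors, preserved under $\mathbb{E}_\gamma$. Your explicit verification of $4\pi f_i \Delta t_{q,v}=\varphi_{q,v}$ and the same equality conditions ($\gamma$ constant or $\theta=\pi/2$) match the paper; the only slight overstatement is calling these an ``if and only if,'' whereas the paper more cautiously labels them sufficient.
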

\begin{proof}
The proof can be readily carried out by inspection of the structure of $h_u(t)$, which yields
\begin{equation}\label{eq:theorem2}
    \mathbb{E}_\gamma\left[\bigg\lvert  \sum_{q=0}^{M_{u,x}-1} \sum_{v=0}^{M_{u,y}-1} e^{j\gamma(t-\Delta t_{q,v}-\tau)}\bigg \rvert^2 \right] \leq M_u^2.
\end{equation}
To have equality in \eqref{eq:theorem2}, the arguments of the complex exponential shall neither be a function of $q$ nor of $v$ (spatial and temporal phase decoupling). Therefore, the sufficient conditions are $\gamma(t) = \overline{\gamma}$ (straightforward condition) or $\Delta t_x = \Delta t_y=0$ ($\theta = \pi/2$). 
\end{proof}

Theorem \ref{th:MD} implies that any phase-encoded modulation signal $\gamma(t)$ leads to a strict inequality for \eqref{eq:theorem2}, except for very low MU $\leftarrow$ SU signal bandwidth, fulfilling \eqref{eq:approxB}, or the case in which the impinging wave is perfectly orthogonal to the STMM plane, i.e., $\theta = \pi/2$. 
To evaluate the space-time phase coupling effect, let us consider a generic non-linear $\gamma(t):[a,b] \longrightarrow \mathbb{R}$, infinitely differentiable for $t_0 \in (a, b)$. We can use the Taylor expansion around $t = t_0$, truncated at the $n$-th term
\begin{equation}\label{eq:GammaTaylor}
\begin{split}
    \gamma(t_0\hspace{-0.1cm}-\Delta t_{q,v}) &\approx \gamma(t_0) + \sum_{i=1}^{n-1} \frac{(-\Delta t_{q,v})^i}{i!}  \gamma^{(i)}(t_0).
\end{split}
\end{equation}
where $\gamma^{(i)}(t_0)$ denotes the $i$-th time derivative of $\gamma(t)$ evaluated in $t=t_0$. Considering only the first order ($n=1$) derivative only, namely
\begin{equation}\label{eq:GammaLinear}
\begin{split}
    \gamma(t_0\hspace{-0.1cm}-\Delta t_{q,v}) \approx \gamma(t_0) -\gamma^{(1)}(t_0) \Delta t_{q,v}
\end{split}
\end{equation}
the second term is a linear phase shift with the meta-atom index that results in the so-called \textit{beam squinting}, i.e., the angular shift of the reflected beam w.r.t. to the desired direction $(\theta,\phi)$. This effect causes a reduction of the reflection gain that can be analytically evaluated from the STMM's array factor that adds up to a possible beam squinting due to a large bandwidth $s_d(t)$ (herein not considered). 
\begin{figure}[b!]
    \centering
    \includegraphics[width=0.45\textwidth]{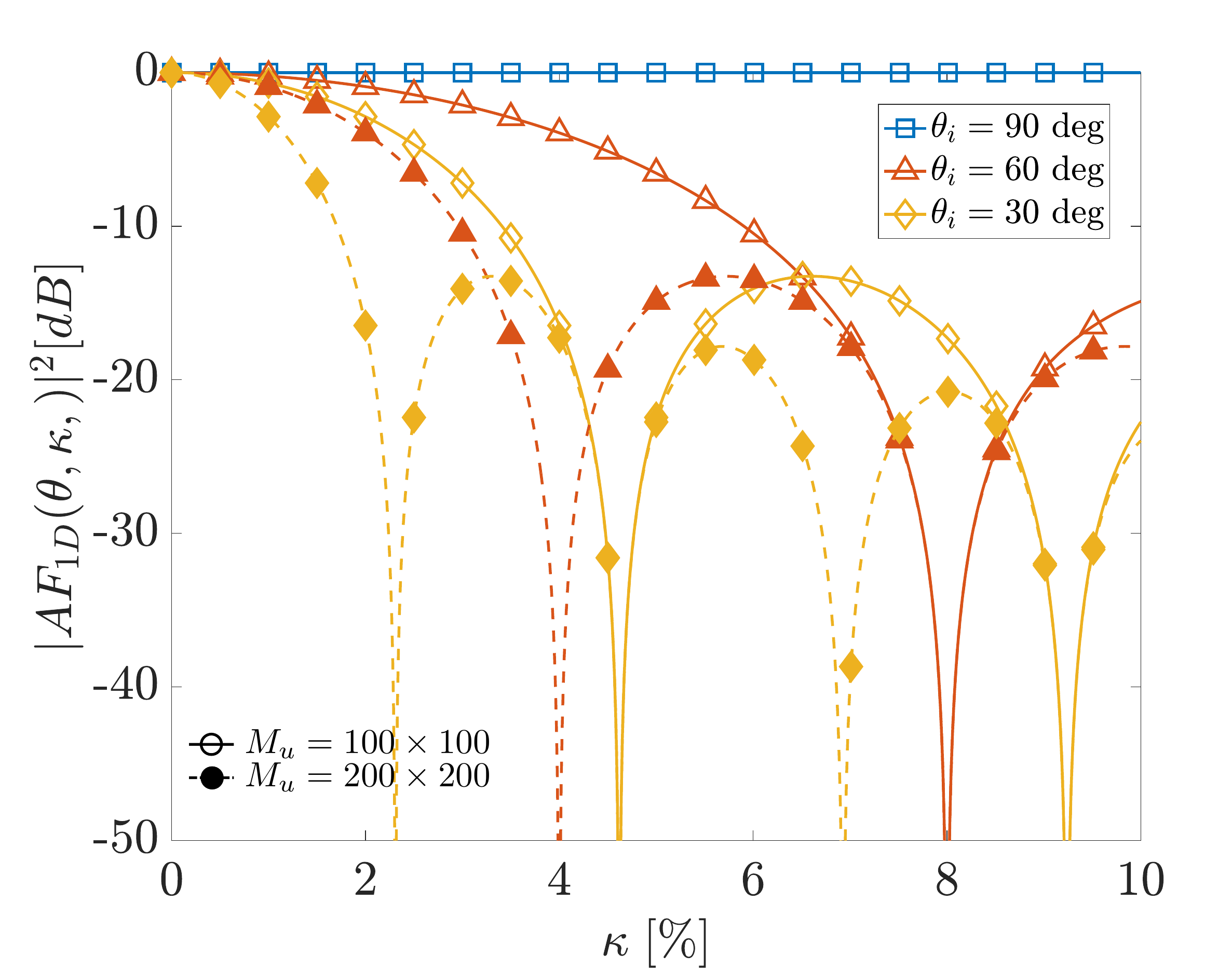}
    \caption{Normalized reflection gain in the presence of space-time phase coupling. The blue line represents the predicted reflection performance by state-of-the-art works for any incident angle (e.g., \cite{9133266}). Due to the coupling, the performance for $\theta_i\neq 90$ deg is in red and purple instead. With the proposed compensation (detailed in Section \ref{subsect:Decoupling}), we can attain the blue line for any value of $\theta_i$.}
    \label{fig:NRG}
\end{figure}

To quantify the beam squinting due to \eqref{eq:GammaLinear}, let us consider a linear temporal phase signal (or piecewise linear, e.g., obtained with frequency shift keying (FSK) modulation) $\gamma(t) = 2\pi f_s t$, where $f_s = f_i - f_o$ is the frequency shift determined by the temporal component of the phase $\beta(\mathbf{x},t)$ as in \eqref{eq:timeGradient}. Let us express the frequency shift as $f_s = \kappa f_i$, where $\kappa \in \mathbb{R}$ is a coefficient that describes the frequency up- or down-conversion, or equivalently $f_o = f_i (1 + \kappa)$. By substituting the expression of $\gamma(t)$ in \eqref{eq:theorem2} and removing the expectation, we obtain
\begin{equation}\label{eq:STCouplingLoss1}
    AF(\theta, \phi, \kappa) = \frac{1}{M_u} \sum_{q=0}^{M_{u,x}-1} \sum_{v=0}^{M_{u,y}-1} e^{j 2 \pi f_i \kappa (t-\Delta t_{q,v})}, 
\end{equation}
with the corresponding normalized reflection gain in \eqref{eq:AF2}.
\begin{figure*}[tb!]
   \centering
\begin{equation}\label{eq:AF2}
    |AF(\theta, \phi, \kappa)|^2 
    = \left|\frac{1}{M_u} \frac{\sin((\pi/4) M_{u,x} \kappa \cos\theta\cos\phi)}{\sin((\pi/4) \kappa \cos\theta\cos\phi)} \frac{\sin((\pi/4) M_{u,y} \kappa \cos\theta\sin\phi)}{\sin((\pi/4) \kappa \cos\theta\sin\phi)}\right|^2.
\end{equation}
    \hrulefill
\end{figure*}
Considering for instance a 1D case (i.e., either a 1D STMM or $\phi=0$), replacing $\Delta t_{q,v}$ in \eqref{eq:propagationDelay_2D} with $\Delta t_{x}$ in \eqref{eq:propagationDelay} ($\phi=0$), the normalized reflection gain simplifies as
\begin{figure}[t!]
    \centering
    \includegraphics[width=0.45\textwidth]{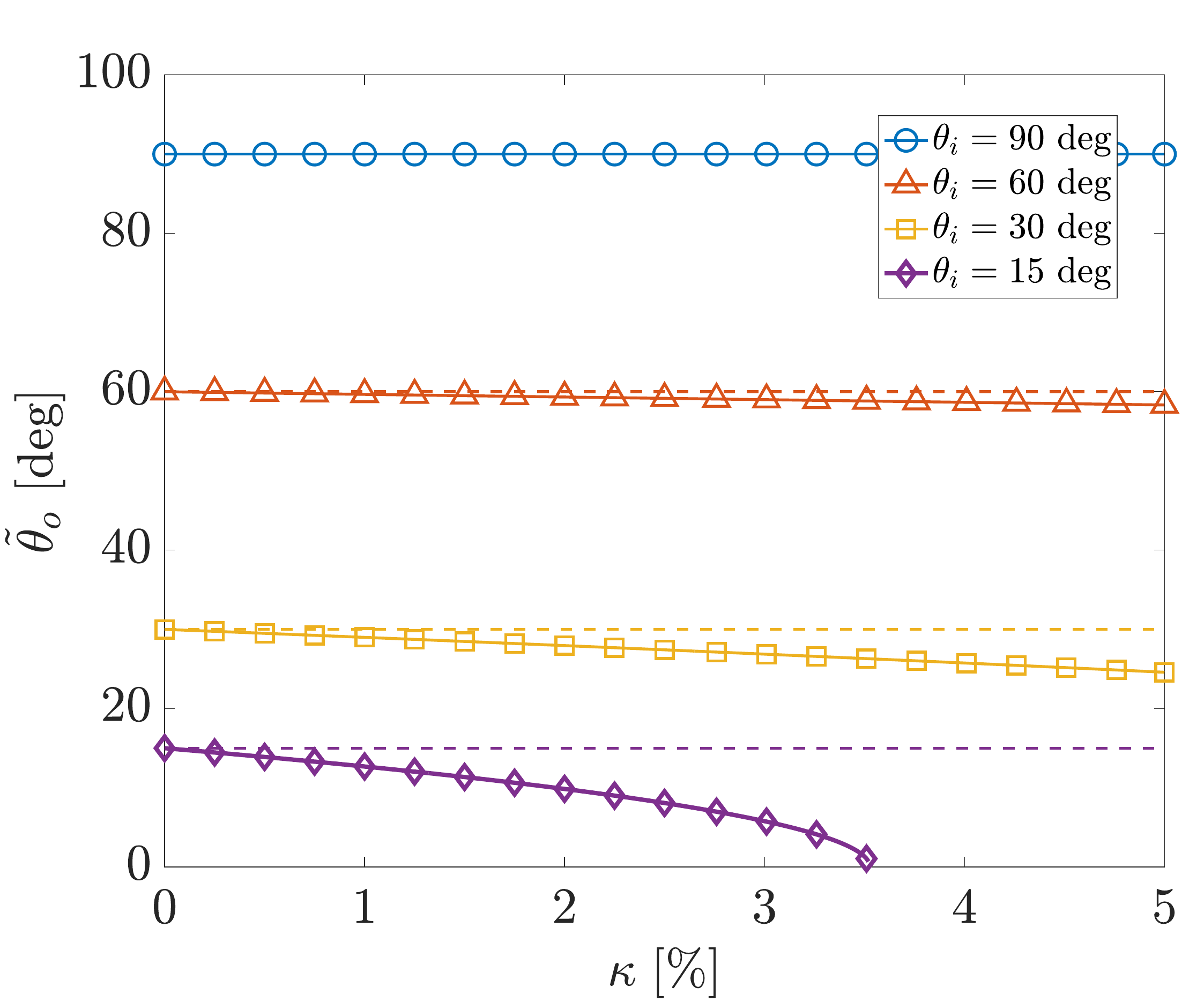}
    \caption{Angle of maximum reflection $\tilde{\theta}_o$, for different incidence angles $\theta_i$ and $M_u=100$.}
    \label{fig:AngDev}
\end{figure}
\begin{equation}\label{eq:AF_drift}
\begin{split}
    |AF_{1D}(\theta, \kappa)|^2
    &= \left| \frac{1}{M_u}\frac{\sin((\pi/2)M_u\kappa \cos\theta)}{\sin((\pi/2)\kappa \cos\theta)}\right|^2.
\end{split}
\end{equation}
This allows gaining insight into the effective reflection gain of the STMM.
Fig. \ref{fig:NRG} depicts the impact of the space-time phase coupling on the normalized reflection gain, and for $\theta \neq \pi/2$ and $\kappa \neq 0$, the system experiences a diminished reflection gain. For instance, at $\theta = \pi/6$ we observe a loss $\geq 3$ dB ($M_u = 100 \times 100$) and $\geq 10$ dB ($M_u = 200 \times 200$) for $\kappa > 2$ \% (e.g., for a frequency shift of $f_s > 0.6$ GHz around $f_i = 30$ GHz). The loss w.r.t. the maximum reflection gain can be justified by the tilting of the direction of maximum reflection, which can be evaluated as
\begin{equation}\label{eq:spatialDrift}
    \tilde{\theta}_o = \arccos \left[(1 + \kappa) \cos \theta \right]
\end{equation}
where $\tilde{\theta}_o$ is the angle of maximum reflection, different from the true one $\theta=\theta_o=\theta_i$ due to the space-time phase gradient coupling. The trend of $\tilde{\theta}_o$ with the frequency shift $\kappa$ is shown in Fig. \ref{fig:AngDev}. The result in \eqref{eq:spatialDrift} is obtained by substituting \eqref{eq:GammaLinear} and \eqref{eq:SPhase} in \eqref{eq:rxSignalExpanded}. Note that for $|\cos \theta (1 + \kappa) | > 1$ in \eqref{eq:spatialDrift} we have evanescent waves, as depicted in Fig.\ref{fig:AngDev2D}.

\begin{figure}[t!]
    \centering
    \includegraphics[width=0.45\textwidth]{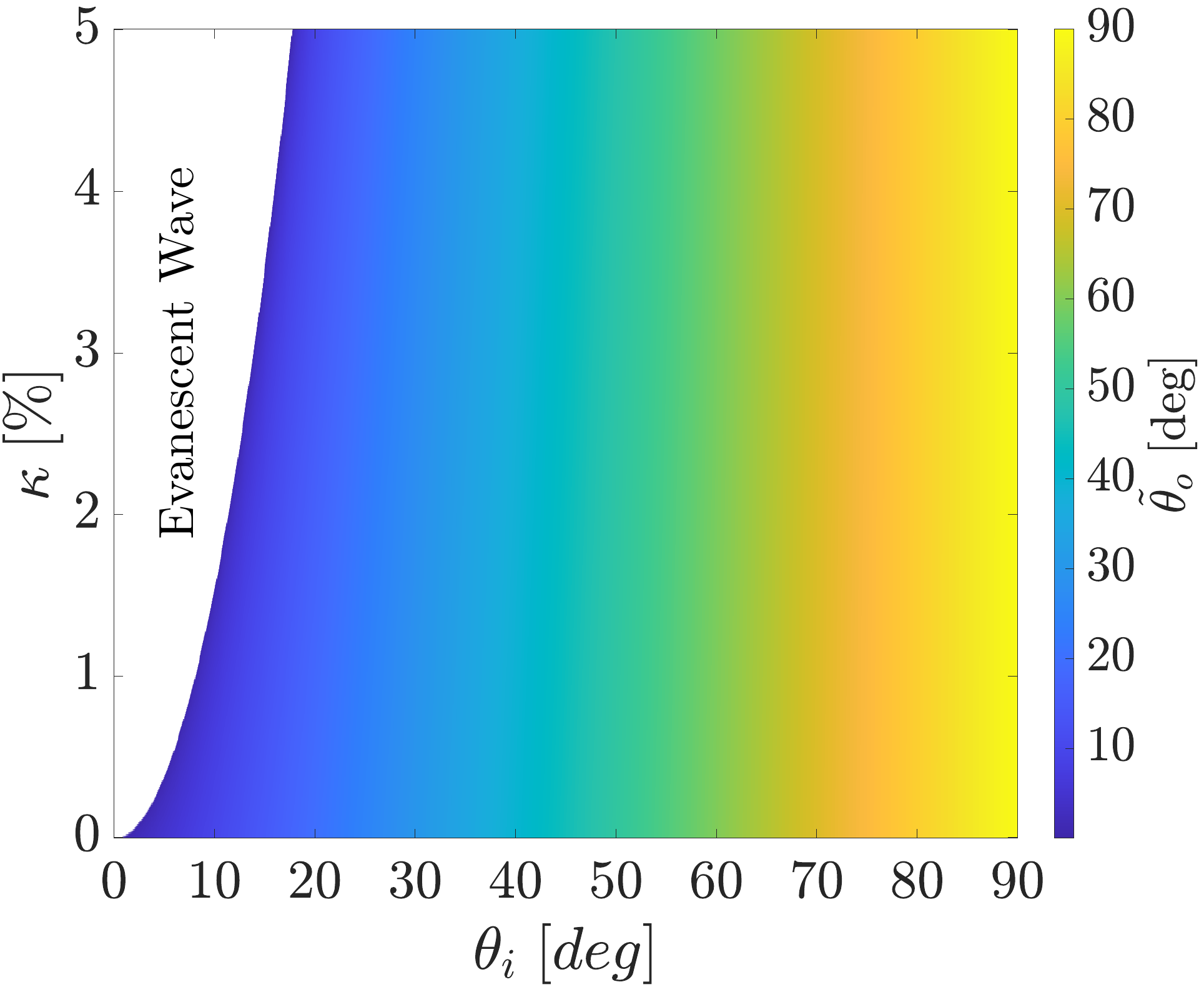}
    \caption{Angle of maximum reflection $\Tilde{\theta}_o$ vs frequency shift $\kappa$ and incidence angle $\theta_i$ ($M_u=100$). For grazing angles, depending on $\kappa$, evanescent waves are excited along the surface.}
    \label{fig:AngDev2D}
\end{figure}

Whenever the phase $\gamma(t)$ is periodic, or cyclo-stationary, the aforementioned approaches are specialized to the harmonics of the Fourier series decomposition of $\gamma(t)$ as experimentally observed in \cite{Zhang2020,Zhang2021}. To implement a wireless communication system using STMM technology, the spatial phase must be decoupled from the temporal one, such that the former can be used to retro-reflect the impinging signal toward the MU and the latter to convey information. 

\subsection{Space-Time Phase Decoupling}\label{subsect:Decoupling}

Decoupling the spatial phase from the temporal phase can be achieved differently depending on the specific STMM architecture.

\subsubsection{Architecture A}

Decoupling can be achieved by delaying the temporal phase of each meta-atom to compensate for the propagation delay caused by the wavefront. Specifically, the phase of the $(q,v)$-th meta-atom must be set as:
\begin{equation}\label{eq:ADecoupling}
    \beta_{q,v}(t) = \varphi_{q,v} + \gamma(t + \Delta t_{q,v}).
\end{equation}
To achieve perfect decoupling, assumed herein, the STMM meta-atoms must be accurately time-synchronized using the knowledge of the incident angles as input, e.g., after a proper estimation.

\subsubsection{Architecture B}

A straightforward yet effective countermeasure to space-time phase coupling is to pre-compensate the spatial phase by considering the induced effect of the temporal phase. In particular, the phase of the $(q,v)$-th meta-atom is
\begin{equation}\label{eq:BDecoupling}
    \beta_{q,v}(t) = \varphi_{q,v} + \gamma(t) + \gamma^{(1)}(t) \Delta t_{q,v} .
\end{equation}
where the spatial phase $\varphi_{q,v}$ now incorporates the meta-atom-dependent delay $\Delta t_{q,v}$. As demonstrated in Section \ref{sect:numerical_results}, architecture B allows for a cost reduction (only a single tunable component is used to control the temporal phase) compared to architecture A, at the price of performance reduction for large bandwidth UL signals. 
The design of $\gamma(t)$, the requirements, and the achievable performance of the proposed communication system are discussed in the following Section \ref{sect:FDComm}.

\section{UL Signal Design}\label{sect:FDComm}

The MU $\rightarrow$ SU signal $y_d(t)$ can be designed following conventional methods (e.g., linear modulation or orthogonal frequency division multiplexing). In this section, we propose the criteria to design the phase shape $\gamma(t)$ for controlled retro-reflection by the STMM.

\subsection{Continuous Phase Modulation}

Let us consider that $\gamma(t)$ is a CPM signal for MU $\leftarrow$ SU:
\begin{equation}\label{eq:cpm}
    \gamma(t) = 2 \pi h \sum_{n} z_n q(t - nT_u)
\end{equation}
where $h$ is the modulation index, $z_n\in\{2b-1\}$, $b\in\{-(M/2),...,(M/2)-1\}$, is the $n$-th $M$-ary information symbol, $T_u$ is the MU $\leftarrow$ SU symbol duration, and $q(t)$ is the phase pulse with the following properties
\begin{equation}
    q(t) =  \begin{cases}
                0 \quad t \leq 0\\
                \frac{1}{2} \quad t \geq L\,T_u
            \end{cases}
\end{equation}
where $L$ represents the modulation memory. The phase pulse is usually defined as
\begin{equation}
    q(t) = \int_{-\infty}^t p(\tau) d\tau
\end{equation}
with $p(t)$ being the pulse shaping filter (PSF). The reason behind choosing CPM is that it represents the most general phase modulation for $\gamma (t)$. Indeed, depending on the parameters set $M, h, L$, and $p(t)$, the CPM can degenerate in the other phase modulations, such as phase shift keying (PSK), frequency shift keying (FSK), minimum shift keying (MSK), and Gaussian MSK (GMSK) \cite{anderson2013digital}. The same reference can be used as an introduction to CPM.

\subsection{Bandwidth Occupancy}

The bandwidth occupied by the MU $\rightarrow$ SU received signal $y_d(t)$ in \eqref{eq:dlSignal}, assuming a frequency-flat channel, is the bandwidth of the transmitted signal $s_d(t)$, i.e., the support of the power spectral density (PSD) $P_{d}(f) = \mathbb{E}\left[\left|\mathcal{F}\left\{s_d(t)\right\}\right|^2\right]$, where $\mathcal{F}\left\{.\right\}$ denotes the Fourier Transform operator. Under the same channel assumption, the bandwidth occupied by the MU $\leftarrow$ SU signal $y_u(t)$ in \eqref{eq:ulSignal}, is the support of its PSD, defined as
\begin{equation}
\begin{split}\label{eq:psd}
    P_{u}(f) &= \mathbb{E}\left[\left|\mathcal{F}\left\{s_d(t) s_u(t)\right\}\right|^2\right] =\\
    &=\mathbb{E}\left[\left|\mathcal{F}\left\{s_d(t)\right\}\right|^2\right] * \mathbb{E}\left[\left|\mathcal{F}\left\{ s_u(t)\right\}\right|^2\right],
\end{split}
\end{equation}
and thus, according to the properties of the convolution, the total bandwidth of the system is
\begin{equation}
    B_{tot} = B_d + B_u
\end{equation}
with $B_u$ being the support of $\mathbb{E}\left[\left|\mathcal{F}\left\{ s_u(t)\right\}\right|^2\right]$. Notice that the resulting spectrum $P_{u}(f)$ is centered around the carrier frequency $f_i$. This follows from the fact that the information-bearing phase signal $\gamma(t)$ has the time-derivative 
\begin{equation}
    \gamma^{(1)}(t) = 2 \pi h \sum_{n} z_n p(t - nT_u)
\end{equation}
that is zero on the ensemble average, i.e., $\mathbb{E}[\gamma^{(1)}(t)] = 0$, thus the average net frequency shift is zero. For a generic CPM signal, the occupied bandwidth is infinite, but, the $99\%$ of the signal energy is within \cite{kuo2004bandwidth} 
\begin{equation}\label{eq:cmpBandwidth}
    B_u = \frac{h}{T_u} \sqrt{\frac{g (M^2 - 1)}{3L}} + \frac{g}{T_uL} = \frac{\epsilon}{T_u}
\end{equation}
where $\epsilon$ is the time-bandwidth product of the specific CPM implementation. In \eqref{eq:cmpBandwidth}, the term $g$ depends on the PSF $p(t)$ used, e.g., $g=1$ for rectangular and $g=1.5$ for raised cosine. The bandwidth occupation is inversely proportional to the CPM memory $L$. For $M=2$, $g=1$ (rectangular PSF), the bandwidth is $B_u = (h\sqrt{L}+1)/(L T_u)$.

It is worth underlining that the proposed communication system is full-duplex, as MU $\rightarrow$ SU and MU $\leftarrow$ SU communications simultaneously occur on the same spectrum portion $B_{tot}$. However, the spectral efficiency is equivalent to an  FDD system, with $B_d$ for MU $\rightarrow$ SU and $B_u$ for MU $\leftarrow$ SU, as detailed in the following. 

\subsection{Spectral Efficiency}

The performance of the proposed system can be assessed in terms of spectral efficiency (SE), considering both MU $\rightarrow$ SU and MU $\leftarrow$ SU links. In particular, the SE upper bound given by Shannon's unconstrained SE is:
\begin{equation}\label{eq:D+Ucapacity}
\begin{split}
        \eta = \underbrace{(1-\mu) \log_2(1+\Upsilon_d)}_{\eta_d} + \underbrace{\mu \log_2(1+\Upsilon_u)}_{\eta_u}
\end{split}
\end{equation}
where the first term ($\eta_d$) is the SE of the MU $\rightarrow$ SU and the second ($\eta_u$) is the SE of the MU $\leftarrow$ SU. In \eqref{eq:D+Ucapacity}, $\mu = B_u/B_{tot}$ is the fractional bandwidth occupied by the MU $\leftarrow$ SU, and $1-\mu = B_d/B_{tot}$ the MU $\rightarrow$ SU counterpart, while $\Upsilon_d$ and $\Upsilon_u$ are the MU $\rightarrow$ SU and MU $\leftarrow$ SU signal-to-noise ratios (SNRs) at the decision variable.

The MU $\rightarrow$ SU SNR is
\begin{equation}\label{eq:snr_d}
    \Upsilon_d \leq \frac{\sigma_{s,d}^2 N M_d}{\varrho_d \sigma_{z}^2}
\end{equation}
where the equality holds when perfect CSI is available and optimal precoding/combing is employed, both providing the maximum MIMO gain, i.e., $N M_d$. 

The MU $\leftarrow$ SU Rx signal undergoes a matched filtering over $T_u$ with the Tx signal $s_d(t-2\tau)$, which is clearly known at MU. Assuming the delay $\tau$ is known, the SNR can be analytically evaluated as in Appendix \ref{app:SNR}, depending on whether the STMM exhibits a wideband behavior w.r.t. $\gamma(t)$ or not. A suitable upper bound on the SNR is obtained by assuming that the wideband regime involves the same UL symbol, i.e., $\Delta T < T_u$, yielding
\begin{equation}\label{eq:snr_u}
    \Upsilon_u \leq \frac{\sigma_{s,d}^2 N^2 M_u^2}{\varrho_u \sigma^2_{z}}  |AF(\theta, \phi, \kappa)|^2 T_u B_{tot}
\end{equation}
where $|AF(\theta, \phi, \kappa)|^2 \leq 1$ is the loss due to the space-time phase coupling, defined in \eqref{eq:AF2}, and $T_u B_{tot}$ is the processing gain of the matched filter. The latter is motivated by the fact that, while the MU $\leftarrow$ SU noise $z(t)$ occupies the whole bandwidth $B_{tot}$, the signal only occupies a fraction of it $\mu B_{tot}$. Thus, the matched filter makes the MU $\leftarrow$ SU SNR inversely proportional to the fractional bandwidth occupation, $\Upsilon_u \propto 1/\mu$. The equality in \eqref{eq:snr_u} holds under the following conditions: \textit{(i)} perfect CSI available at both MU and SU, \textit{(ii)} known delay $\tau$ at the MU, and \textit{(iii)} constant phase $\gamma(t) \in [0, T_u)$.

The MU $\rightarrow$ SU and MU $\leftarrow$ SU capacity in \eqref{eq:D+Ucapacity} has a global optimum with respect to the unconstrained UL bandwidth allocation $\mu$. From the derivative of $\eta$ w.r.t. $\mu$
\begin{equation}\label{eq:D+Lcapacity_derivative}
    \begin{split}
        \frac{\mathrm{d}\eta}{\mathrm{d}\mu} &= -\log_2(1+\Upsilon_d)  - \frac{\widetilde{\Upsilon}_u \log_2e}{\widetilde{\Upsilon}_u + \mu} + \\
        &+ \log_2\left(1+\frac{\widetilde{\Upsilon}_u}{\mu}\right) = 0
    \end{split}
\end{equation}
whose root provides the optimal duplexing allocation. In \eqref{eq:D+Lcapacity_derivative}, we denote with $\widetilde{\Upsilon}_u$ the MU $\leftarrow$ SU SNR terms that do not depend on the allocation $\mu$. The total capacity is analyzed in Section \ref{sect:numerical_results}.

\section{Numerical Results}\label{sect:numerical_results}

\begin{figure}[b!]
    \centering
    \subfloat[]{\includegraphics[width=0.45\textwidth]{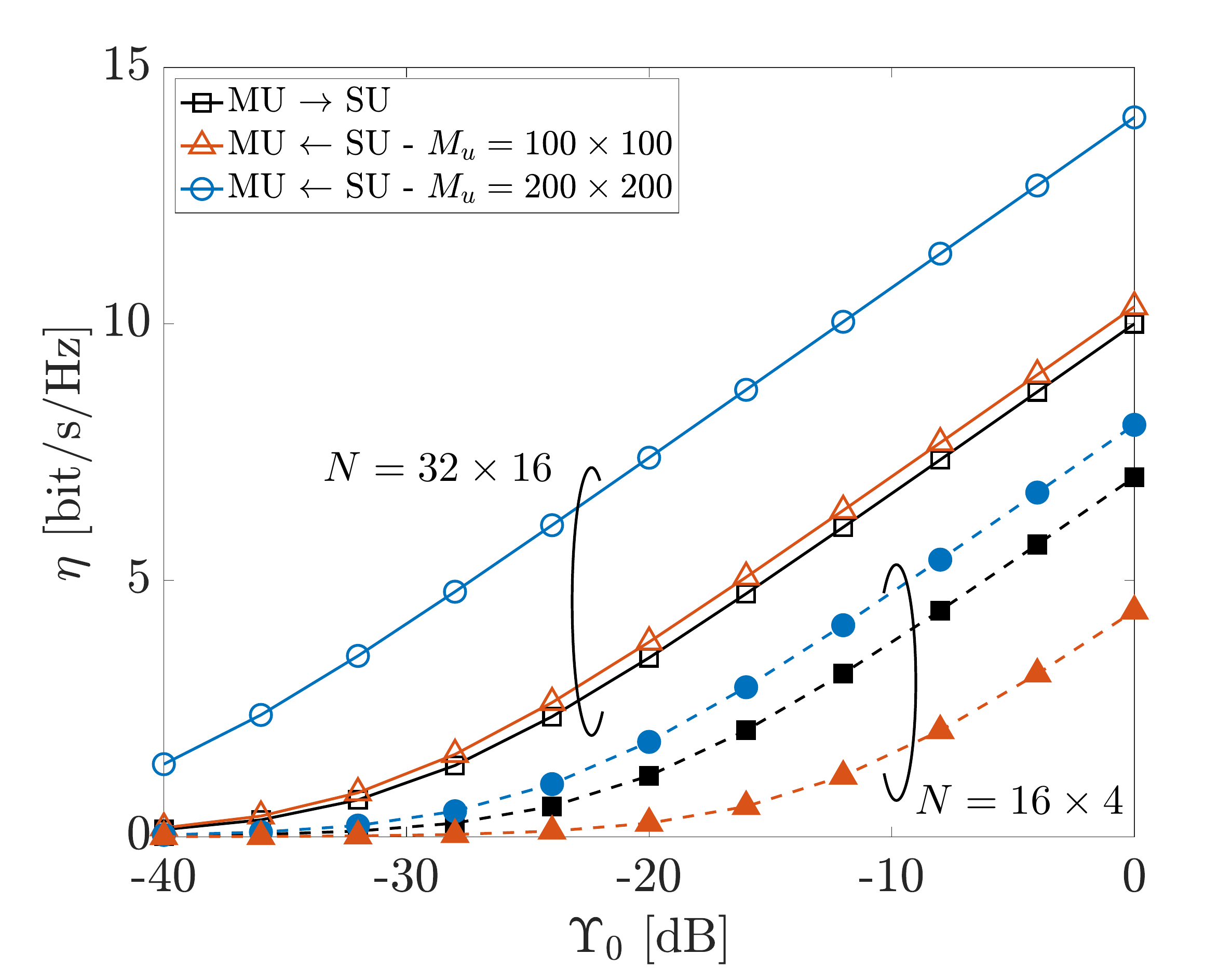} \label{subfig:SE_vs_SNR}}\\
    \subfloat[]{\includegraphics[width=0.45\textwidth]{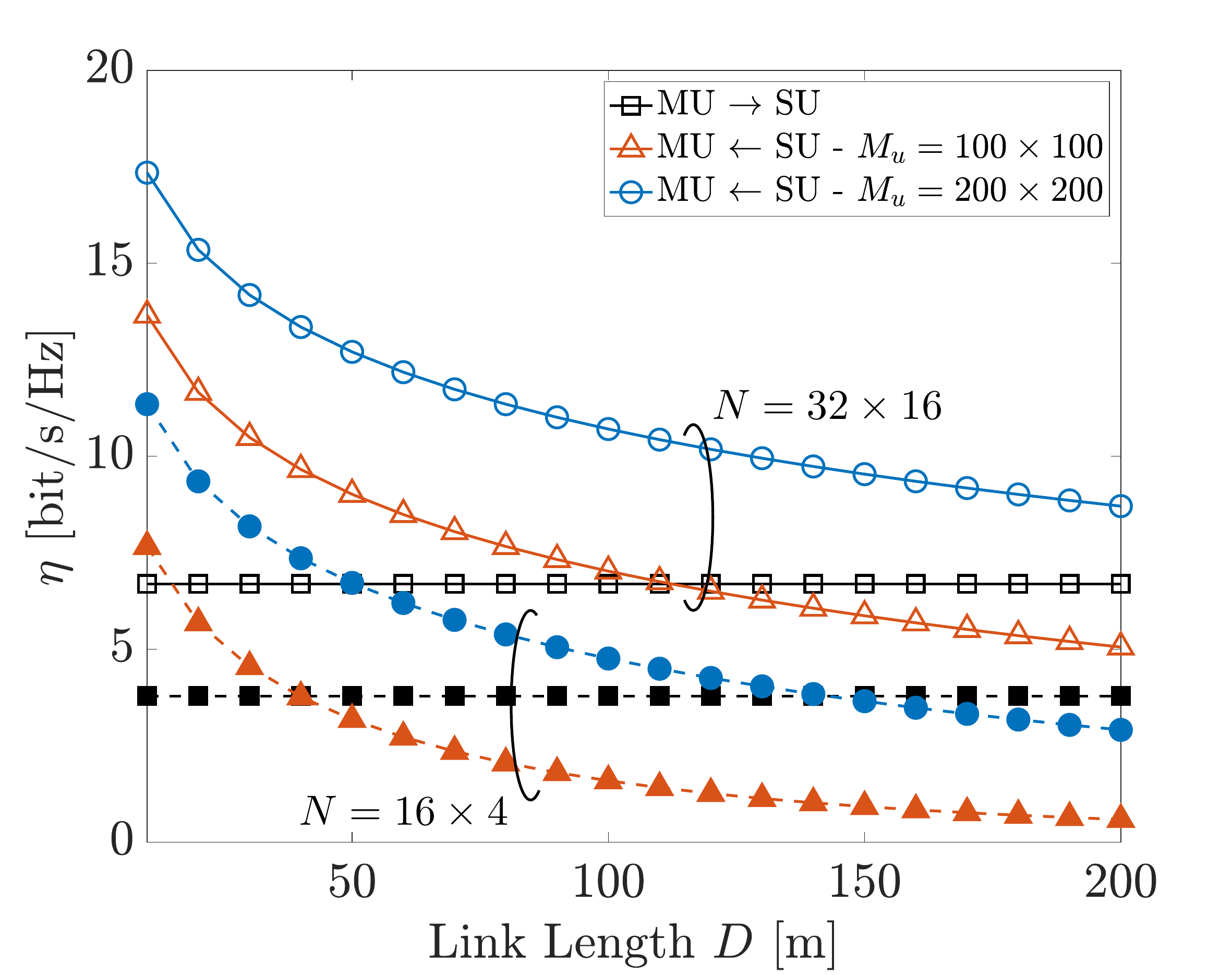} \label{subfig:SE_vs_d}}\\
    \subfloat[]{\includegraphics[width=0.45\textwidth]{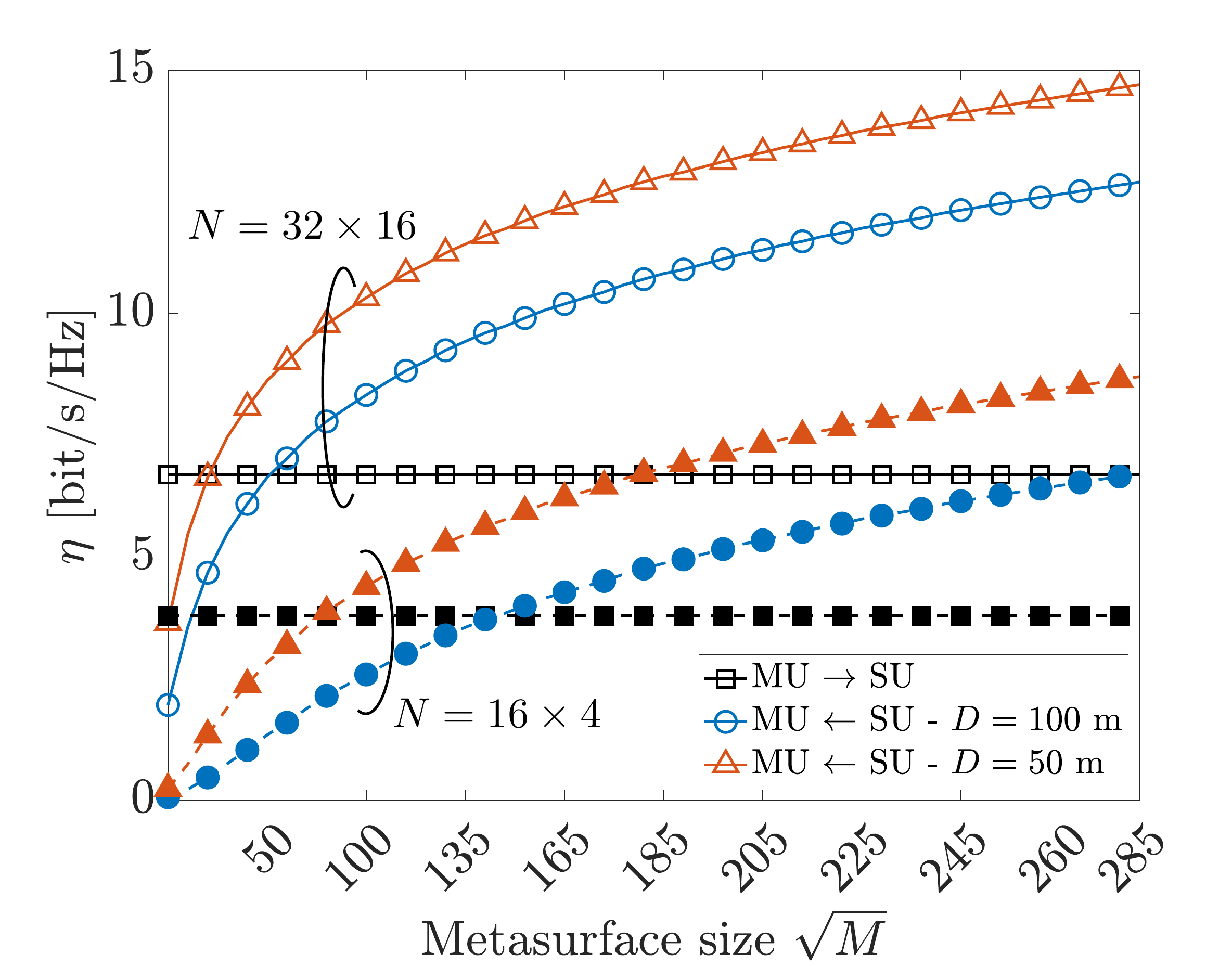} \label{subfig:SE_vs_A}}
    \caption{Spectral efficiency considering different MU arrays configuration ($N = 16 \times 4,\; 32 \times 16$) as a function of various design parameters: (\ref{subfig:SE_vs_SNR}) SNR at the antenna of the SU in MU $\rightarrow$ SU, (\ref{subfig:SE_vs_d}) MU $\leftrightarrow$ SU distance, and (\ref{subfig:SE_vs_A}) STMM size.}
    \label{fig:Design}
\end{figure}

This section presents numerical and analytical results for the proposed STMM-based wireless communication system, aimed at gaining insight into upper achievable performance, operating conditions, and possible limitations. In all the following, the frequency of operation is $f_0=30$ GHz. We consider a CPM modulation with a rectangular PSF, $L=1$, and $M$-ary$=2$, where $\kappa=h/(2T_u)$. This coincides with CP-FSK for $h=1$ (herein considered) and with MSK for $h=1/2$. The trends and considerations are the same for a generic CPM or any phase/frequency modulation, but the absolute values must be derived case by case.

\subsection{Impact of system parameters}\label{subsect:results_vs_param}

The first set of results, summarized in Fig. \ref{fig:Design}, shows the achievable SE (both MU $\rightarrow$ SU and MU $\leftarrow$ SU) varying system parameters, in the utter performance assumption of $|AF(\theta, \phi, \kappa)|^2 = 1$, namely with \eqref{eq:snr_u}, which is commonly assumed in the state of the art on STMM \cite{9133266}. These results can be attained for (\textit{i}) architecture A with space-time decoupling \eqref{eq:ADecoupling}, (\textit{ii}) architecture B with space-time decoupling \eqref{eq:BDecoupling} and limited bandwidth $B_u$, and (\textit{iii}) normal incidence, i.e., $\theta_i = 90$ deg.
In all cases, we have $\mu=0.5$, thus the same bandwidth for MU $\rightarrow$ SU and MU $\leftarrow$ SU. Fig. \ref{subfig:SE_vs_SNR} shows the SE varying the MU $\rightarrow$ SU equivalent single-input single-output (SISO) SNR, namely 
\begin{equation}\label{eq:SISO_SNR}
    \Upsilon_0 =\frac{\Upsilon_d}{N M_d} = \frac{\sigma^2_{s,d}}{\varrho_d \,\sigma_z^2},
\end{equation}
as well as the number of MU antennas $N$ and STMM meta-atoms $M_u$. The MU $\curvedrightleft$ SU link length is fixed to $D=100$ m and the number of SU Rx meta-atoms $M_d = 8 \times 4$. We notice the dramatic impact of the number of STMM meta-atoms $M_u$ that allows boosting the MU $\leftarrow$ SU SE (up to be higher than the MU $\rightarrow$ SU one for $M_u=200\times 200$). A similar effect, though to a lesser extent, is obtained by increasing $N$. The trend of the SE varying the MU $\curvedrightleft$ SU link length $D$ is instead shown in Fig. \ref{subfig:SE_vs_d}. For fixed MU $\rightarrow$ SU performance $\eta_d$, the UL one rapidly decays with $D$, still being higher than the former one for massive MU arrays and STMM (i.e., a comparatively high number of STMM meta-atoms). As can be expected, when $N$ is fixed, e.g., by technological limitations, the overall MU $\leftarrow$ SU SE $\eta_u$ is mostly ruled by the number of STMM meta-atoms $M_u$, as confirmed by Fig. \ref{subfig:SE_vs_A}, where the SE is let vary as a function of the STMM size (side number of meta-atoms $\sqrt{M_u}$), for $D=50$ m and $D=100$ m. The results in Fig. \ref{fig:Design} show that with an appropriate design, the STMM can provide excellent  MU $\leftarrow$ SU performance even at relatively large distances, comparable with the cell radius in current mmWave 5G systems. Remarkably, most of the energy employed to achieve $\eta_d$ and $\eta_u$ is spent at the MU, while the SU does not need a power amplifier. However, wideband effects due to the  MU $\leftarrow$ SU signal $\gamma(t)$ practically limit the performance and, mostly, the MU $\leftarrow$ SU bandwidth $B_u$, as discussed in the following.

\begin{figure}[b!]
    \centering
    \subfloat[$M_u = 100\times100$]{\includegraphics[width=0.47\textwidth]{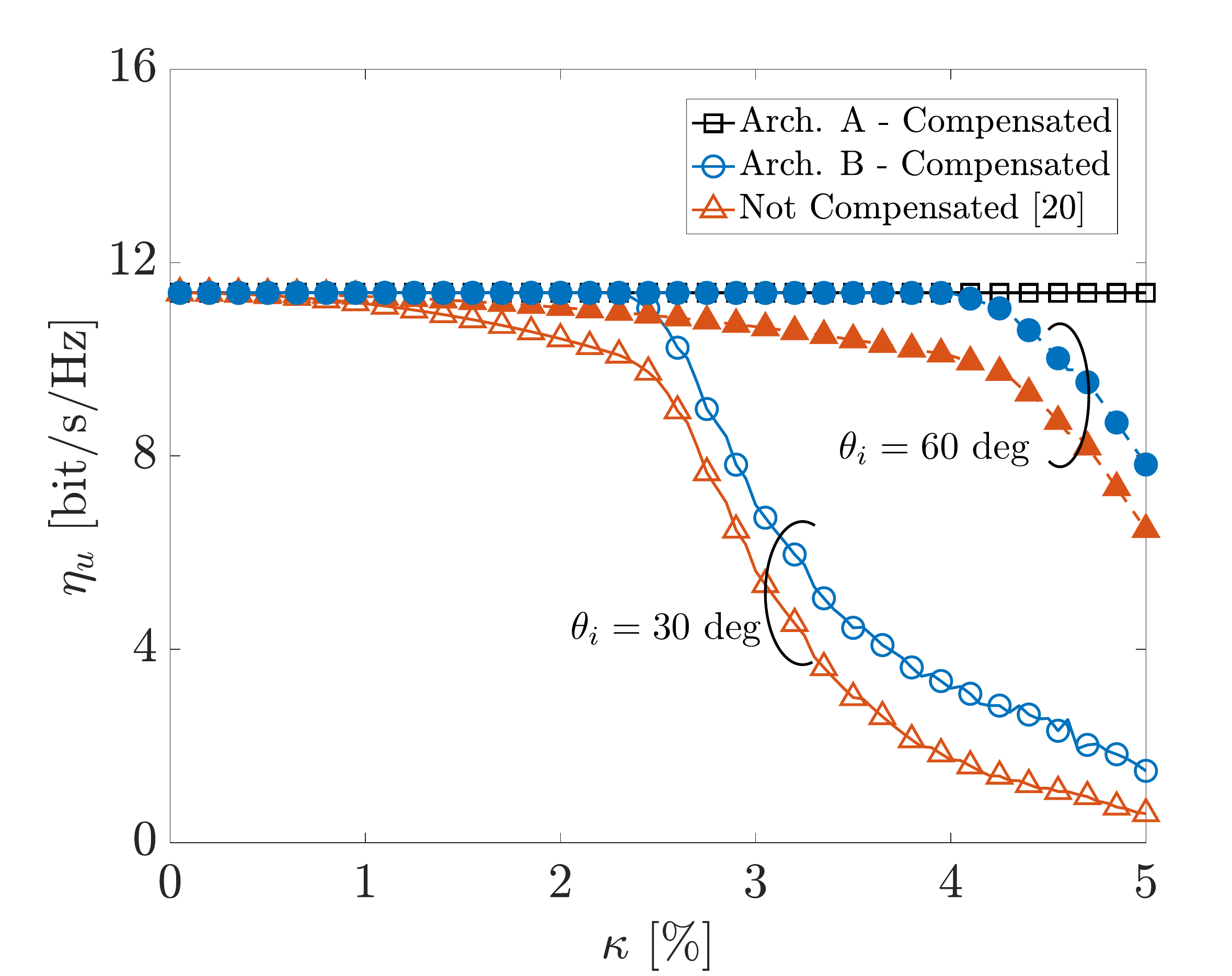}\label{subfig:SEvsBM100}}\\
    \subfloat[$M_u = 200\times200$]{\includegraphics[width=0.47\textwidth]{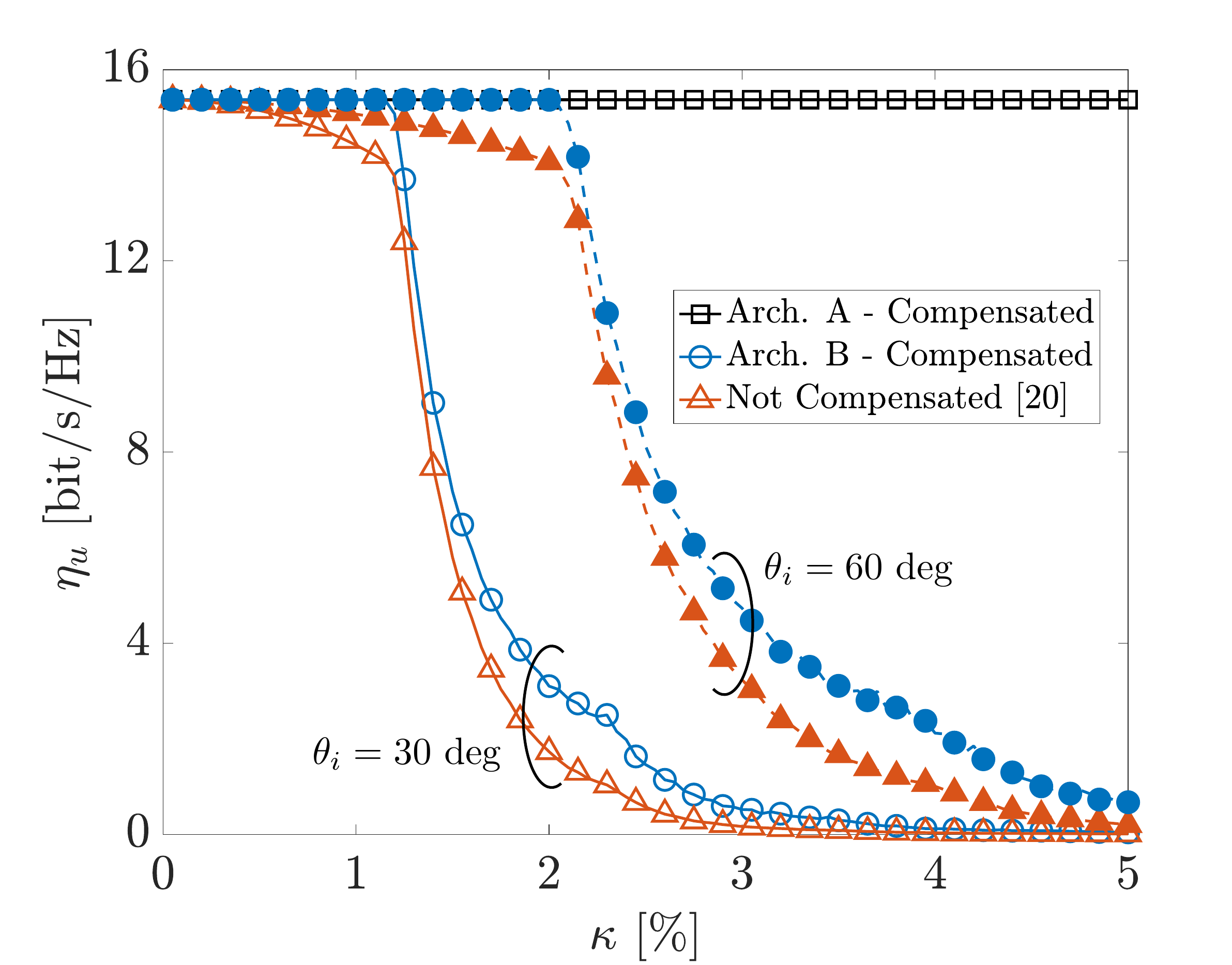}\label{subfig:SEvsBM200}}
    \caption{Spectral efficiency for \ref{subfig:SEvsBM100}) $M_u=100\times 100$ and \ref{subfig:SEvsBM200}) $M_u=200\times200$, and considering different incident angles varying the net frequency shift $\kappa$, which is proportional to the MU $\leftarrow$ SU bandwidth $B_u$. The red line represents the results obtained according to \cite{9133266}, which neglects the compensation. The blue and black lines represent the achievable spectral efficiency $\eta_u$ when compensating the coupling in Arch. B and A, respectively.}
    \label{fig:SE_vs_Kappa}
\end{figure}

\subsection{Impact of space-time phase coupling and decoupling}\label{subsect:results_vs_decoupling}

The wideband effects on the STMM operation are investigated in Fig. \ref{fig:SE_vs_Kappa}, reporting the MU $\leftarrow$ SU SE $\eta_u$ varying $\kappa$, i.e., the equivalent linear frequency shift due to the modulating signal $\gamma(t)$. We assume \textit{(i)} $B_u\gg B_d$, such that to ignore any wideband effect due to the MU $\rightarrow$ SU signal $s_d(t)$, that is the approximation (a) in \eqref{eq:rxSignalExpanded} and \textit{(ii)} perfect channel state information (CSI) is available at the SU, namely the incidence/reflection angles $(\theta,\phi)$ are perfectly known. The distance is $D=100$ m and $N=32\times 16$. The aim of these results is to quantitatively examine the impact of the space-time phase coupling at the STMM, while the impact on an imperfect CSI is discussed in the next set of results. The analysis in Section \ref{subsect:SGradientDistortion} has shown that the space-time coupling is not influenced by transmitted power or distance, but rather by MU $\leftarrow$ SU bandwidth $B_u$, angle of incidence/reflection $(\theta,\phi)$, and the size of the STMM $M_u$. Fig. \ref{fig:SE_vs_Kappa} considers both architectures A and B, with ideal compensations \eqref{eq:ADecoupling}-\eqref{eq:BDecoupling}, respectively, and two incidence angles, $\theta=30,60$ deg ($\phi=0$ deg for simplicity). The non-compensation case is shown as a benchmark. architecture A has the best performance, independent of $\kappa$, thanks to the possibility of compensating for the two-way delay at each meta-atom of the STMM. Considering architecture B, instead, where only one time-varying component is present (and common to all the STMM meta-atoms), the UL SE $\eta_u$ progressively degrades for increasing grazing incidence angles ($\theta\rightarrow 0$ deg) and STMM meta-atoms $M_u$. In both Fig. \ref{subfig:SEvsBM100} and \ref{subfig:SEvsBM200}, we can distinguish three operation regions, outlined in Appendix \ref{app:SNR} and corresponding to \textit{(i)} $\Delta T \ll T_u$, i.e., ideal narrowband STMM operation; \textit{(ii)} $\Delta T \leq T_u$, i.e., wideband STMM operation without ISI; \textit{(iii)} $\Delta T > T_u$, i.e., wideband STMM operation with ISI. The first operation region occurs for very low modulation bandwidths (e.g., $\kappa \leq 1\%$ for $M_u=100\times100$, $\theta=30$ deg and $\kappa \leq 0.5\%$ for $M_u=200\times200$, $\theta=30$ deg), limiting the effective MU $\leftarrow$ SU throughput. The second operation region happens for larger modulation bandwidths, up to a \textit{cut-off frequency} $\overline{\kappa}$ (e.g., $\kappa \leq \overline{\kappa} = 2.5\%$ for $M_u=100\times100$, $\theta=30$ deg and $\kappa \leq \overline{\kappa} = 1.2\%$ for $M_u=200\times200$, $\theta=30$ deg). In this regime, the STMM benefits from the space-time decoupling \eqref{eq:BDecoupling} to improve the SE $\eta_u$ up to $\approx 3$ bit/s/Hz (for $\kappa \rightarrow \overline{\kappa}$ in Fig. \ref{subfig:SEvsBM200}). Above the cut-off limit $\overline{\kappa}$, the STMM realized with architecture B does not work properly: the SNR after the matched filter $\Upsilon_u$ is abated by the presence of the ISI from past modulation symbols in $\gamma(t)$ (see Appendix \ref{app:SNR}). The performance rapidly drops to zero. It is worth remarking that past symbols represent ISI as far as the simple matched filter approach (before trellis decoding) is adopted. In principle, the Rx can be designed to account for the unwanted additional delay due to the STMM reflection, attaining a similar performance to architecture A; however, this would increase the complexity of the implementation and it is not considered here. Alternatively, a viable solution may be to consider an STMM that combines the advantages of architectures A and B, e.g., by allowing the delay tuning at different subsets of meta-atoms.

\begin{figure}[b!]
    \centering
    \subfloat[][]{\includegraphics[width=0.45\textwidth]{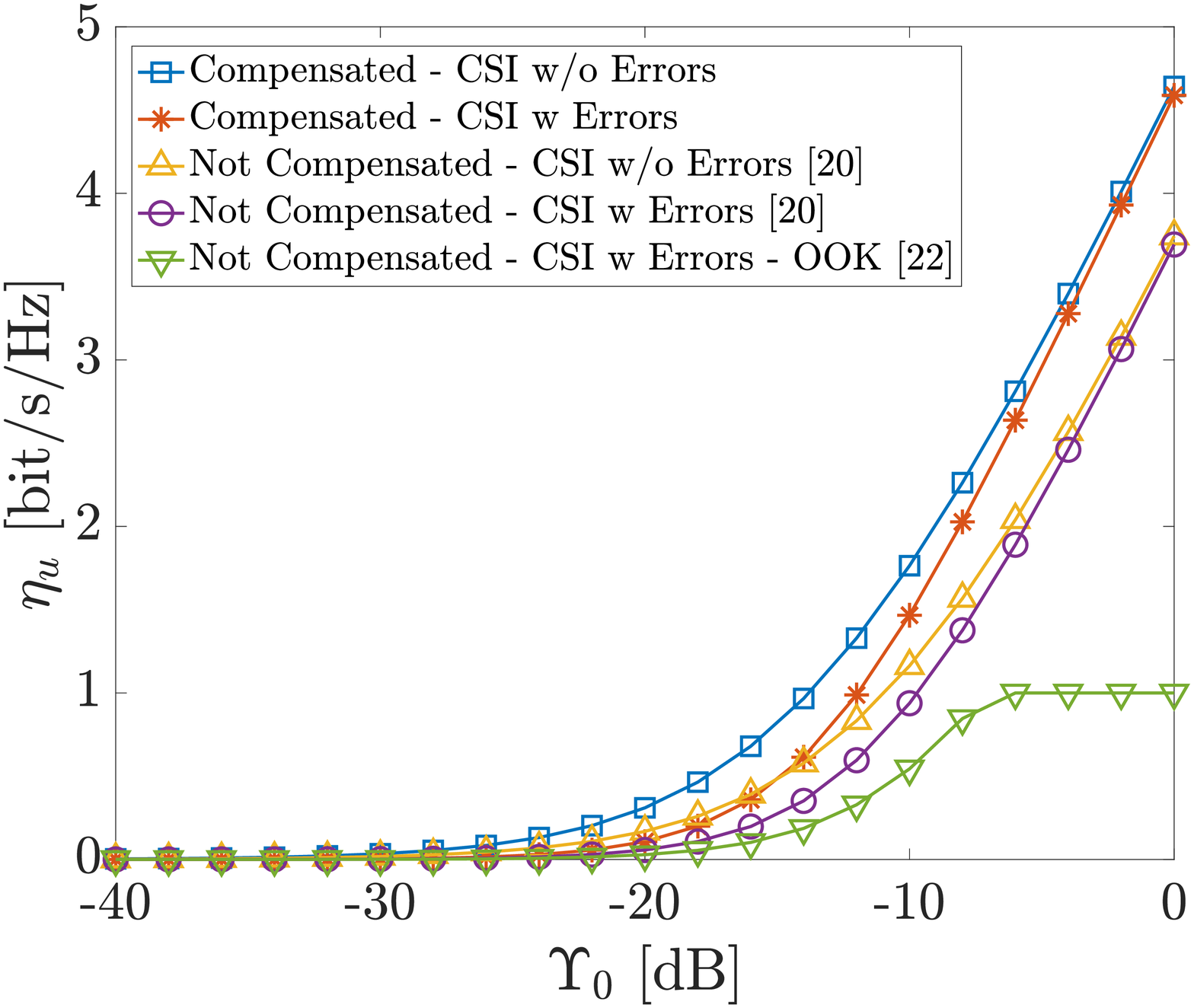}\label{subfig:SE_vs_SNR_chEst_M100}}\\
    \subfloat[][]{\includegraphics[width=0.45\textwidth]{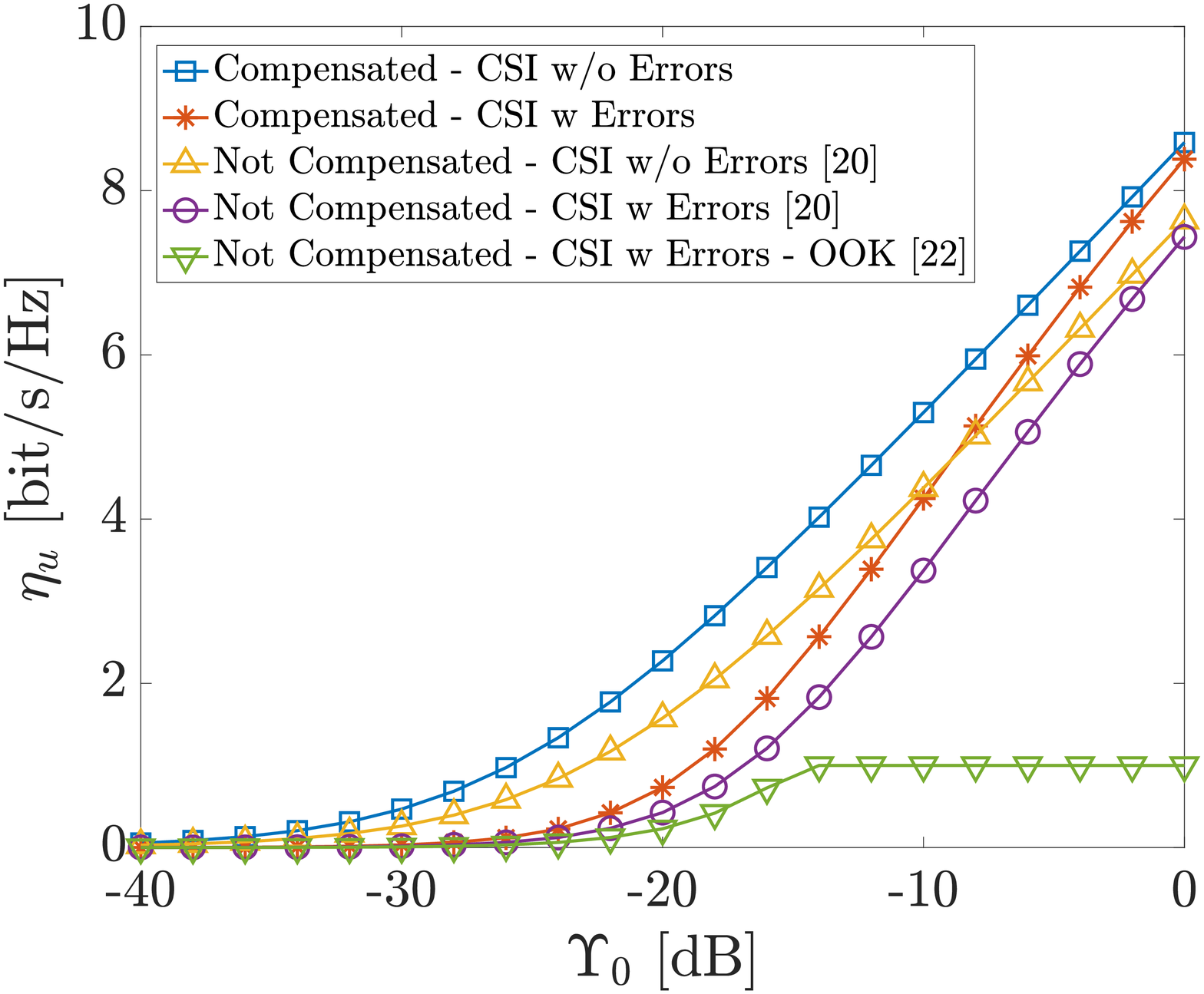}\label{subfig:SE_vs_SNR_chEst_M200}}
    \caption{MU$\leftarrow$SU spectral efficiency varying the equivalent SISO SNR $\Upsilon_0$, for (\ref{subfig:SE_vs_SNR_chEst_M100}) $M_u=100 \times 100$ and (\ref{subfig:SE_vs_SNR_chEst_M200}) $M_u=200 \times 200$ and considering both perfect and imperfect CSI as well as space-time phase coupling (with the proposed compensation and without compensation \cite{9133266}). As a benchmark, green curves denote the spectral efficiency for a OOK-modulated backscatter communication system subject to CSI errors and without compensation (e.g., \cite{Liang2022_RIS_backscatter_survey}).}
    \label{fig:SE_vs_SNR_chEst}
\end{figure}

\subsection{Impact of imperfect CSI}\label{subsect:results_vs_CSI}

In practice, incidence/reflection angles $(\theta,\phi)$ are estimated at the SU Rx array, leading to unavoidable errors affecting both the spatial phase configuration of the STMM (imperfect pointing) as well as the space-time phase decoupling (imperfect compensation). The combination of the latter two effects has the consequence of diminishing the amount of Rx signal energy at the MU side, decreasing the SE w.r.t. the ideal case (see Appendix \ref{app:channel_est} for the analytical derivations). 
Fig. \ref{fig:SE_vs_SNR_chEst} shows the MU$\leftarrow$SU SE $\eta_u$  varying the equivalent SISO SNR $\Upsilon_0$ \eqref{eq:SISO_SNR}, with and without space-time decoupling (as in \cite{9133266}) and in presence of errors on the estimation of the elevation angle $\theta$, whose variance is quantified herein with the Cram\'{e}r-Rao lower bound (CRLB) \cite{Chetty2022_CRB} (Appendix \ref{app:channel_est}). 
We consider the very same simulation parameters used for Fig. \ref{subfig:SE_vs_SNR}, namely $N=16 \times 4$ Tx antennas at the MU, $M_d=8 \times 4$ Rx antennas at the SU, $\theta=30$ deg, $\phi=0$ deg (known), $\kappa = 1\%$ (Fig. \ref{subfig:SE_vs_SNR_chEst_M100}), $\kappa = 2\%$ (Fig. \ref{subfig:SE_vs_SNR_chEst_M200}). With the current settings, the standard deviation of the error on $\theta$ ranges from $8.7$ deg at $\Upsilon_0=-40$ dB to $0.087$ deg at $\Upsilon_0=0$ dB. 
As a further benchmark, we show the constrained SE of a backscatter communication system where the STMM is OOK-modulated, with no space-time phase decoupling and imperfect CSI as well \cite{Liang2022_RIS_backscatter_survey}. 
In the low SNR region, the error on the estimation of $\theta$ (ruled by $\Upsilon_0$) jeopardizes any space-time phase decoupling effort. Namely, the imperfect pointing caused by a wrong setting of the spatial STMM phase (red curves) is much larger than the effect of the space-time phase coupling counterpart (yellow curves). Notice that yellow curves represent the performance of state-of-the-art works on STMM-based wireless systems \cite{9133266}, where $\theta$ is assumed to be perfectly known but the space-time coupling is not compensated. However, as $\Upsilon_0$ grows large, the performance trend swaps: the effect of the imperfect CSI vanishes (red curves attain the blue ones asymptotically) and only an uncompensated space-time phase coupling affects the performance. Of course, as the spatial selectivity of the STMM increases ($M_u$), the impact of imperfect CSI and coupling gets worse. Interestingly, we observe a threshold SNR $\overline{\Upsilon}_0$ value (depending on the size of of the STMM $M_u$) that determines two operating regions: for $\Upsilon_0 \leq \overline{\Upsilon}_0$, the SE $\eta_u$ is dominated by the effect of CSI errors (requiring proper countermeasures, e.g., increasing the Rx array size $M_d$), while for $\Upsilon_0 > \overline{\Upsilon}_0$, the performance is dominated by the space-time phase coupling, confirming the outcomes of Fig. \ref{fig:SE_vs_Kappa} and requiring the suitable compensation proposed in this work. In any case, the proposed STMM-based wireless communication system outperforms the OOK-modulated one, which only employs a two-level amplitude modulation (yielding a -3 dB matched filter gain at the MU) and does not operate any compensation for the space-time phase coupling.

\begin{figure}[b!]
    \centering
    \includegraphics[width=0.45\textwidth]{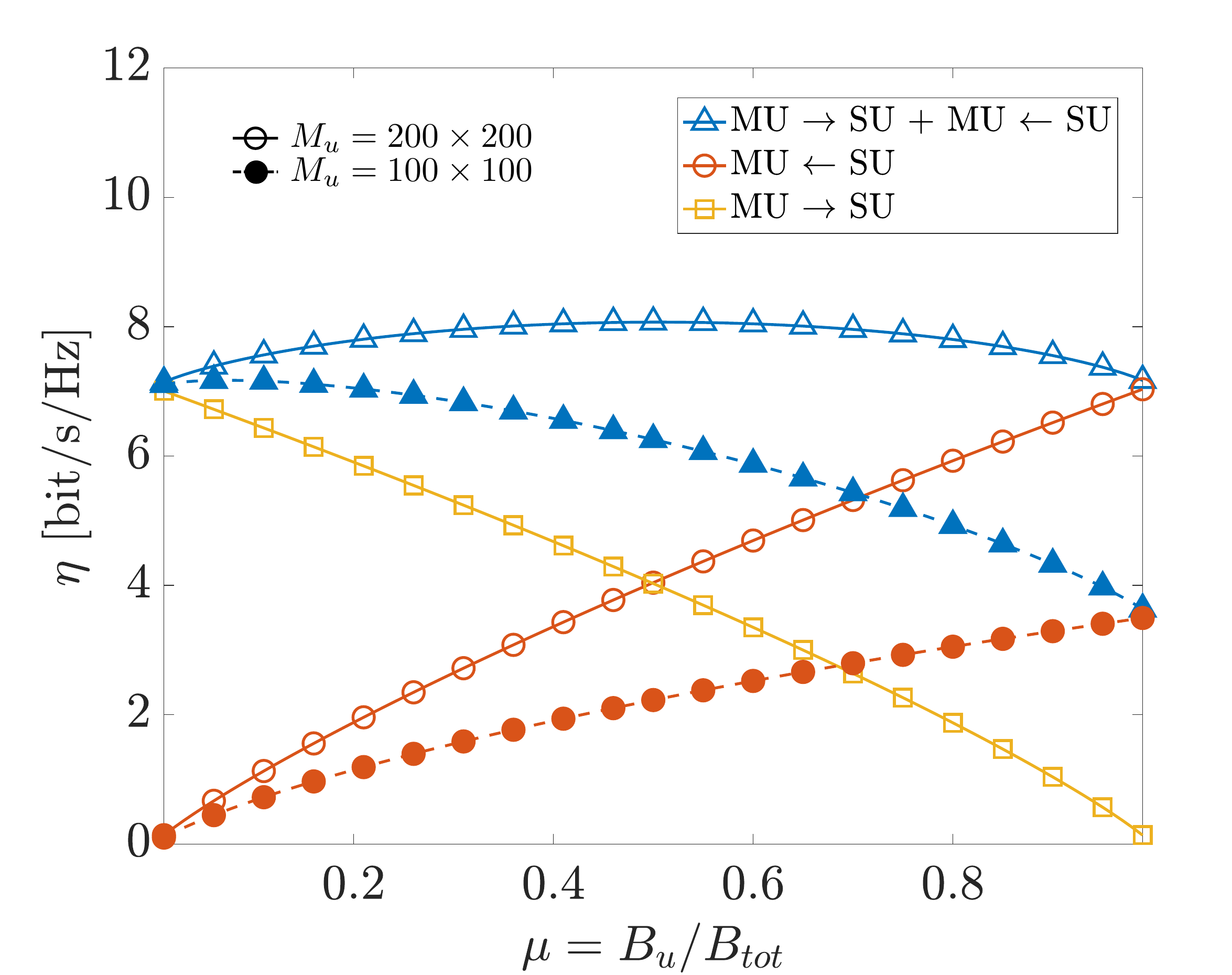}
    \caption{Total spectral efficiency considering different metasurface sizes, i.e., $M_u = 100 \times 100$ and $M_u=200 \times 200$, and different distances varying the bandwidth ratio $\mu$.}
    \label{fig:SE_vs_MU}
\end{figure}

\subsection{Impact of the bandwidth ratio $\mu$}\label{subsect:results_vs_mu}

Concerning the optimal choice of $\mu$, namely the bandwidth splitting between MU $\rightarrow$ SU and MU $\leftarrow$ SU, it depends on the size of the STMM and the distance between the MU and SU. Fig. \ref{fig:SE_vs_MU} shows the MU $\rightarrow$ SU, MU $\leftarrow$ SU, and total SE $\eta_d$, $\eta_u$ and $\eta$ respectively, varying $\mu$ for $D=100$ m, $N=32\times 16$, and assuming ideal narrowband STMM behavior (upper-performance limit). The total SE shows a peak, as predicted by \eqref{eq:D+Lcapacity_derivative}, that depends on the number of STMM meta-atoms $M_u$. This result can have two readings: one can maximize the spectral efficiency $\eta$ over the entire bandwidth $B_{tot}$, and thus the fraction $\mu$ allocated depends on the specific setup, or one can allocate the fraction $\mu$ based on the MU and SU requirements, but the spectral efficiency $\eta$ over the entire band may not be optimal. In other words, the method used to allocate the fraction $\mu$ of the band can be based on either maximizing overall spectral efficiency $\eta$ or on specific MU and SU requirements. While maximizing overall spectral efficiency $\eta$ is desirable in some cases, it may not be the best approach if the MU and SU requirements must be met. In such cases, one may have to compromise on the overall spectral efficiency $\eta$ to meet the requirements of the MU and SU.

\section{Conclusions and Open Challenges} \label{sect:conclusion}

This paper proposes a mathematical model for the design of STMMs in wireless communication systems, where the spatial component of the modulation controls the direction in which the signal is reflected, and the temporal component conveys information in the designated direction. The model examines two potential implementations, Architectures A and B, with Architecture A having greater design flexibility than Architecture B, which is more cost-effective. The model describes the unique features of the STMMs, including the space-time phase coupling and how it affects system performance. Analytical descriptions of these effects are provided, along with suggested countermeasures. Additionally, the proposed model addresses the design criteria and challenges of STMM-based wireless communications, specifically in the context of a full-duplex system architecture, which is one of the promising solutions where STMMs can be beneficial. The full-duplex system architecture comprises a master unit (MU) that transmits a downlink MU $\rightarrow$ SU signal and a slave unit (SU) that can receive, reflect, and modulate the MU $\rightarrow$ SU using STMM technology. In order to transmit information in uplink MU $\leftarrow$ SU, the SU modulates the temporal phase gradient applied to the metasurface meta-atoms. To validate the analytical findings, numerical results are presented, demonstrating the efficacy of the proposed model, even in the presence of imperfect CSI acquisition. 
Nonetheless, the proposed mathematical model only serves as an initial step for designing and analyzing STMM-based wireless communication systems. Further research is imperative to fully explore the capabilities and limitations of STMM technology. This should include investigations into multi-user scenarios, the impact of mobility on system performance, and experimental validations to bridge the gap between theory and practical implementation. Addressing these challenges will pave the way for harnessing the full potential of STMMs in future wireless communication applications.

\appendices
\section{}\label{app:SNR}
 The Rx signal at the MU side undergoes matched filtering and phase estimation, before being fed to the sequence detector, e.g., Viterbi. We derive the SNR after the matched filter at the MU in the assumption that the Rx signal is the one expressed by \eqref{eq:rxSignalExpanded}, approximation $(a)$. The estimated phase at the $n$-th MU $\leftarrow$ SU symbol is provided by 
\begin{equation}\label{eq:MF}
    \hat{\gamma}(nT_u) = \arg \left\{\int_{n T_u}^{(n+1)T_u} y_u(t) s_d^*(t-2\tau) dt\right\}.
\end{equation}
The SNR is obtained from $r_n$ with a double expectation over the Tx signal $s_d(t)$ and the modulating phase $\gamma(t)$. 
The power of the useful signal part is in \eqref{eq:Rx_power_w_MF},
\begin{figure*}[!t]
\begin{equation}\label{eq:Rx_power_w_MF}
\begin{split}
     \sigma^2_{signal} & = \mathbb{E}_{s,\gamma}\left[\bigg\lvert\int_{T_u} \rho \lvert s_d(t-2\tau)\rvert ^2 \sum_{u,v} e^{j\gamma(t-\Delta t_{u,v} -\tau)} \bigg\rvert^2dt \right]
     \leq \\
     &  \leq \mathbb{E}_s\left[\int_{ T_u} \hspace{-0.25cm}|\rho|^2 \lvert s_d(t-2\tau)\rvert ^4 dt\right] \mathbb{E}_\gamma \left[  \int_{ T_u}  \bigg\lvert \sum_{u,v} e^{j\gamma(t-\Delta t_{u,v} -\tau)} \bigg\rvert^2 dt \right] \approx\\
     & \approx |\rho|^2 \sigma^4_{s,d} T_u \,\,\mathbb{E}_\gamma \left[  \int_{T_u}  \bigg\lvert \sum_{u,v} e^{j\gamma(t-\Delta t_{u,v} -\tau)} \bigg\rvert^2 dt \right],
\end{split}
\end{equation}
\hrulefill
\end{figure*}
where we made use of the Schwartz inequality and assume, for simplicity of exposition, that the MU $\rightarrow$ SU signal $s_d(t)$ has a constant power $\sigma^2_{s,d}$ within the integration interval $T_u$. According to the size of the STMM $M_u = M_{u,x}\times M_{u,y}$ compared to the MU $\leftarrow$ SU bandwidth $B_u$ (related to the symbol time through \eqref{eq:cmpBandwidth}), we can distinguish three cases, as detailed in the following.
\subsubsection{$\Delta T \ll T_u$} In this case, the STMM exhibits a narrowband behavior w.r.t. to the MU $\leftarrow$ SU signal, namely the variation of $\gamma(t)$ across the STMM can be neglected. Therefore, we have 
\begin{equation}
    \mathbb{E}_\gamma \left[  \int_{T_u}  \bigg\lvert \sum_{u,v} e^{j\gamma(t-\Delta t_{u,v} -\tau)} \bigg\rvert^2 dt \right] \approx M_u^2 T_u 
\end{equation}
and the reflection gain is maximized under the assumption of the perfect knowledge of the channel (i.e., perfect spatial phase configuration in \eqref{eq:cmpBandwidth}). The power of the noise after the matched filter is
\begin{equation}
    \sigma^2_{noise}  \leq N_0 \sigma^2_{s,d} T_u
\end{equation}
(including the beamforming gain at the MU), thus the SNR is upper-bounded by
\begin{equation}
    \Upsilon_u = \frac{\sigma^2_{signal}}{\sigma^2_{noise}} \leq \frac{|\rho|^2 M_u^2 \sigma_{s,d}^2 }{\sigma^2_z} T_u B_{tot}
\end{equation}
from which follows \eqref{eq:snr_u} by expanding $\rho$.
\subsubsection{$\Delta T \leq T_u$} In this case, the STMM exhibits a wideband behavior w.r.t. to the MU $\leftarrow$ SU signal, namely the variation of $\gamma(t)$ across the STMM cannot be ignored in \eqref{eq:Rx_power_w_MF}, but it is less then the symbol duration. Now, we have 
\begin{equation}
\begin{split}
    \mathbb{E}_\gamma \left[  \int_{T_u}  \bigg\lvert \sum_{u,v} e^{j\gamma(t-\Delta t_{u,v} -\tau)} \bigg\rvert^2 dt \right] \approx \\
    \approx M_u^2 |AF(\theta,\phi,\kappa)|^2 T_u 
\end{split}
\end{equation}
it yields the case explored in Section \ref{sec:STCDesign} with the beam squinting effect. The SNR upper bound is
\begin{equation}
    \Upsilon_u \leq \frac{|\rho|^2 M_u^2 \sigma_{s,d}^2 }{\sigma^2_z} |AF(\theta,\phi,\kappa)|^2 T_u B_{tot}.
\end{equation}
The previous expression tends to be the narrowband one when applying compensation techniques in Section \ref{subsect:Decoupling}.

\subsubsection{$\Delta T > T_u$} In this case, the STMM exhibits a wideband behavior w.r.t. to the MU $\leftarrow$ SU signal and the variation of $\gamma(t)$ across the STMM exceeds the symbol duration. This means that the MU receives a signal with multiple superposed symbols, and the STMM meta-atoms can be clustered according to the specific MU $\leftarrow$ SU symbol they are reflecting at a given instant of time. Considering for simplicity a linear STMM, we have $C = \lceil M_u \Delta t/T_u\rceil$ spanned symbols, where $\Delta t$ is the propagation delay across two adjacent meta-atoms, $\Delta t = (\lambda_i/4)(\cos\theta/c)$. Considering a 2D STMM again, we can define the following set of STMM meta-atoms:
\begin{equation}
\begin{split}
     \mathcal{M}'_u &= \{(u,v)\lvert\Delta t_{u,v} < T_u\}\\
    \mathcal{M}''_u &= \{(u,v)\lvert\Delta t_{u,v} \geq T_u\}
\end{split}
\end{equation}
for $u=0,...,M_{u,x}-1$, $v=0,...,M_{u,y}-1$, where the first one indicates the symbol of interest and the second all the remaining $C-1$ ones. Notice that estimating the phase with the simple matched filter in \eqref{eq:MF} implies that the $C-1$ symbols other than the $n$-th of interest act as \textit{inter-symbol interference} (ISI). Therefore, we have the relation in \eqref{eq:signal+ISI}
\begin{figure*}[!t]
\begin{equation}\label{eq:signal+ISI}
    \mathbb{E}_\gamma \left[ \int_{T_u}  \bigg\lvert \sum_{u,v} e^{j\gamma(t-\Delta t_{u,v} -\tau)} \bigg\rvert^2 dt \right] \approx \underbrace{\lvert \mathcal{M}'_u \rvert^2 |AF(\theta,\phi,\kappa)|^2 T_u}_{\text{useful signal}} + \underbrace{\mathbb{E}_\gamma \left[ \int_{T_u}  \bigg\lvert \sum_{(u,v)\in \mathcal{M}''_u} e^{j\gamma(t-\Delta t_{u,v} -\tau)} \bigg\rvert^2 dt \right]}_{\text{ISI}}
\end{equation}
\hrulefill
\end{figure*}
and thus the useful signal power is
\begin{equation}
    \sigma^2_{signal} \approx |\rho|^2 \sigma^4_{s,d} \lvert \mathcal{M}'_u \rvert^2 |AF(\theta,\phi,\kappa)|^2 T^2_u 
\end{equation}
while the ISI power $\sigma^2_{ISI}$ comes from the integration across different symbols, with detrimental effects on system performance. The SNR turns into SINR as
\begin{equation}
 \begin{split}
     \Upsilon_u = \frac{|\rho|^2 \sigma^4_{s,d} \lvert \mathcal{M}'_u \rvert^2 |AF(\theta,\phi,\kappa)|^2 T^2_u}{ \sigma^2_{ISI} + N_0 \sigma^2_{s,d} T_u }.
 \end{split}
\end{equation}

\section{}\label{app:channel_est}

This Appendix derives the model for the characterization of the system performance in case of imperfect CSI, i.e., when the incident elevation angle $\theta$ is estimated with unavoidable error at the SU side, $\widehat{\theta} = \theta + \delta\theta$. Extension to the case in which $\phi$ is subject to estimation errors is possible but covered here, thus we consider $\phi=0$ (and $\widehat{\phi}=0$) to ease the derivations. We herein consider, for simplicity, that \textit{(i)} the estimated angle $\widehat{\theta}$ is Gaussian-distributed, i.e., $\widehat{\theta} \sim \mathcal{N}(\theta, \sigma^2_\theta)$ and \textit{(ii)} the estimation attains the CRLB, hence the variance of the error $\delta\theta$ is 
\cite{Chetty2022_CRB}:
\begin{equation}\label{eq:CRB_theta}
    \sigma^2_\theta = \frac{96}{\pi^2 \sin^2 \theta \, \Upsilon_0 \, N^2 M_{d}(M_{d}^2-1)}.
\end{equation}
Let us consider the Rx signal \eqref{eq:rxSignalExpanded} $(a)$, where we apply the following space-time phase at the STMM (with decoupling as for \eqref{eq:ADecoupling})
\begin{equation}
    \widehat{\beta}_{q}(t) = 4 \pi f_i \widehat{\Delta t}_{q} + \gamma(t + \widehat{\Delta t}_{q})
\end{equation}
where
\begin{equation}
    \begin{split}
    \widehat{\Delta t}_{q} = \frac{\lambda_i}{4 c } q \cos\widehat{\theta} & \approx \frac{\lambda_i}{4 c} q \left(\cos\theta - \delta\theta \sin\theta \right)
    \end{split}
\end{equation} 
(the approximation is for small $\delta \theta$). We obtain from \eqref{eq:rxSignalExpanded}:
\begin{figure*}[t!]
\centering
\begin{equation}\label{eq:rxSignalExpanded_chest2}
    \begin{split}
        y_u(t) & = M_{u,y}\, \rho  \sum_{q=0}^{M_{u,x}-1} e^{-j4 \pi f_i (\Delta t_{q} - \widehat{\Delta t}_{q})} e^{j\gamma(t-(\Delta t_{q}-\widehat{\Delta t}_{q})-\tau)} s_d(t-2\tau) + z(t)\\
        & \overset{(i)}{=}  M_{u,y}\, \rho  \, s_u(t-\tau) s_d(t-2\tau) \sum_{q=0}^{M_{u,x}-1} \underbrace{e^{-j4 \pi f_i (\Delta t_{q} - \widehat{\Delta t}_{q})}}_{\text{imperfect pointing}} \underbrace{e^{-j2 \pi \kappa f_i (\Delta t_{q} - \widehat{\Delta t}_{q})}}_{\text{imperfect decoupling}} + z(t) \\
        & \overset{(ii)}{\approx} M_{u,y}\, \rho  \, s_u(t-\tau) s_d(t-2\tau) \sum_{q=0}^{M_{u,x}-1} 
        e^{j\frac{\pi}{2} (2+ \kappa) q \delta \theta \sin \theta} + z(t)
    \end{split}
\end{equation}
\hrulefill
\end{figure*}
where equality $(i)$ holds by considering a linearly time-varying phase $\gamma(t) = 2 \pi \kappa f_i t$ (or piece-wise linear) and approximation $(ii)$ is for small $\delta\theta$. Notice that \eqref{eq:rxSignalExpanded_chest2} resembles the ideal multiplicative model \eqref{eq:rxSignalExpanded3}, except for the residual summation, representing the loss due to imperfect estimation of $\theta$. We can distinguish between two effects: the imperfect pointing, due to errors in the spatial phase configuration, and the imperfect decoupling, due to errors in the compensation of the residual delay at the STMM. The ensemble average of the energy loss at the MU side due to imperfect CSI can be evaluated from \eqref{eq:rxSignalExpanded_chest2} $(ii)$ as follows:
\begin{equation}\label{eq:AF_2D_chest}
    \begin{split}
      \frac{1}{M_{u,x}^2} \mathbb{E}_{\widehat{\theta}}  \left[ \bigg\lvert\frac{\sin\left(M_{u,x} \,(\pi/4) (2+\kappa) \delta \theta \sin\theta\right)}{\sin\left((\pi/4) (2+\kappa) \delta \theta \sin\theta\right)} \bigg \rvert^2\right]
    \end{split}
\end{equation}
where the expectation is carried out over the distribution of $\widehat{\theta}$, and the overall effects are shown in Section \ref{subsect:results_vs_CSI}.

\bibliographystyle{IEEEtran}
\bibliography{Bibliography}

\end{document}